\pgfplotsset{compat=newest}
\let\oldbibliography\thebibliography
\renewcommand{\thebibliography}[1]{\oldbibliography{#1}
\setlength{\itemsep}{1pt}}
\newtheorem{theorem}{Theorem}[section]
\newtheorem{proposition}[theorem]{Proposition}
\newtheorem{lemma}[theorem]{Lemma}
\newcommand{\A}{\mathbf{A}}
\newcommand{\B}{\mathbf{B}}
\newcommand{\x}{\mathbf{x}}
\newcommand{\y}{\mathbf{y}}
\newcommand{\p}{\mathbf{p}}
\newcommand{\q}{\mathbf{q}}
\newcommand{\s}{\mathbf{s}}
\newcommand{\w}{\mathbf{w}}
\newcommand{\<}{\left(}
\renewcommand{\>}{\right)}
\newcommand{\KL}{D_\text{KL}}
\newcommand{\wKL}{D_{\text{KL}(w)}}
\renewcommand{\|}{\parallel}
\newcommand{\rk}{r_k\<\x_{-k}\>}
\newcommand{\rp}{r_k\<\p_{-k}\>}
\def\BState{\State\hskip-\ALG@thistlm}
\let\oldReturn\Return
\renewcommand{\Return}{\State\oldReturn}
\algnewcommand{\Initialize}[1]{%
  \State \textbf{Initialize:} \hspace*{0.2em}\parbox[t]{.8\linewidth}{\raggedright #1}
}
\newtheorem{property}{Property}
\theoremstyle{definition}
\newtheorem{definition}{Definition}
\theoremstyle{remark}
\newtheorem{remark}{Remark}
\crefname{property}{property}{properties}
\Crefname{property}{Property}{Properties}
\pgfplotsset{compat=newest}
\newsavebox{\measure@tikzpicture}
  \def\tikz@width{#1}%
  \def\tikzscale{1}\begin{lrbox}{\measure@tikzpicture}% 
  \edef\tikzscale{\pgfmathresult}%
\newcommand{\xk}{x_{ki}}
\newcommand{\N}{\mathcal N}
\newcommand{\lt}{\left[}
\newcommand{\rt}{\right]}
\tikzset{node distance=4.5cm, every state/.style={semithick,fill=gray!10}, every edge/.style={draw,->,>=stealth',auto,semithick}}
\title{Exploration-Exploitation in Multi-Agent Competition: Convergence with Bounded Rationality}
\author[1]{Stefanos Leonardos}
\author[1]{Georgios Piliouras}
\author[2]{Kelly Spendlove}
\affil[1]{Singapore University of Technology and Design, \emph{\{stefanos\_leonardos, georgios\}@sutd.edu.sg}}
\affil[2]{Mathematical Institute, University of Oxford, \emph{spendlove@maths.ox.ac.uk}}
\date{}
\begin{document}

\maketitle

\begin{abstract}
The interplay between exploration and exploitation in competitive multi-agent learning is still far from being well understood. Motivated by this, we study smooth Q-learning, a prototypical learning model that explicitly captures the balance between game rewards and exploration costs. We show that Q-learning always converges to the unique quantal-response equilibrium (QRE), the standard solution concept for games under bounded rationality, in weighted zero-sum polymatrix games with heterogeneous learning agents using positive exploration rates. Complementing recent results about convergence in weighted potential games~\cite{Cou15,Leo21}, we show that fast convergence of Q-learning in competitive settings is obtained regardless of the number of agents and without any need for parameter fine-tuning. As showcased by our experiments in network zero-sum games, these theoretical results provide the necessary guarantees for an algorithmic approach to the currently open problem of equilibrium selection in competitive multi-agent settings.
\end{abstract}

\section{Introduction}\label{sec:introduction}

%\paragraph{Problem:} Learning in multi-agent competitive environments. 
%\paragraph{Important:} Many AI settings involve competition between agents. 
%\paragraph{Unsolved:} Up to now, competition has been mainly studied in the context of 2-player zero-sum games (give citations) and there are only scarce results in multi-agent settings. However, there are fundamental differences in multi- vs 2-agent competition: there is no value in the game and agents can get radically different payoffs in different equilibria (outcomes). 

Zero-sum games and variants thereof are arguably amongst the most well studied settings in game theory. Indeed much attention has focused on the class of strictly competitive games~\cite{Aum17}, i.e., two player games such that when both players change their mixed strategies, then either there is no change in the expected payoffs, or one of the two expected payoffs increases and the other decreases.\footnote{In fact, as recent work has established these strictly competitive games are formally equivalent to weighted zero-sum games, i.e., affine transformations of zero-sum games~\cite{Adl09}.} According to Aumann~\cite{Aum17},
\textit{``Strictly competitive games constitute one of the few areas in game theory, and indeed in the social sciences, where a fairly sharp, unique prediction is made."} 
%\footnote{If $\A_1$ and $\A_2$ are the payoff matrices of the two agents then a game is strictly competitive game iff there exist $\lambda>0, \mu \in \mathbb{R}$ such that $\A_2^T= -\lambda \A_1 +\mu \mathbf{U}$ where $\mathbf{U}$ is an all one matrix.} of zero-sum games~\citep{adler2009note}. 
The unique prediction, of course, refers to the min-max solution and the resulting values guaranteed to both agents due to the classic work of von Neumann~\cite{Neumann1928}.

%which initiated the field of game theory as a whole, it is well understood that zero-sum games admit a ``solution".

Unfortunately, when we move away from the safe haven of two-agent strictly competitive games, a lot of these regularities disappear. For example, in multi-agent variants of zero-sum and strictly competitive games, several critical aspects of the min-max theorem collapse~\cite{Cai16}. Critically, Nash equilibrium payoffs
need not be unique. In fact, there could be continua of equilibria with the payoff range of different agents corresponding to positive measure sets. Furthermore, Nash equilibrium strategies need not be exchangeable (i.e., mixing-matching strategies from different Nash profiles does not lead to a Nash) or max-min. Thus, network competition is not only significantly harder, but poses qualitatively different questions than two-agent competition. \par
Nevertheless, and in spite of the intense recent interest, inspired by ML applications such as Generative Adversarial Networks (GANs) and actor-critic systems, on understanding learning dynamics in zero-sum games and even network variants thereof \cite{Das18,Gid19,Abe21}, so far there has been no systematic study of how agents should deal with uncertainty of the resulting payoffs in such games. In these settings, the use of purely optimization-driven regret-minimizing algorithms,  is no longer equally attractive as in the two-player case. The multiplicity of equilibria and the lack of unique value give rise to a non-trivial problem of \emph{equilibrium selection} and learning agents face the fundamental dilemma between exploration and exploitation \cite{Cla98,Pan05,Bus08,Zha19}. These considerations drive our motivating question:

\textit{Are there exploration-exploitation dynamics that provably converge in networks of strictly competitive games? How do they behave in settings with multiple, payoff diverse Nash equilibria?}

\paragraph{Model and Results.} We study a well known smooth variant of Q-learning~\cite{Wat92,Wat89}, with softmax or Boltzmann exploration (one of the most fundamental models of exploration-exploitation in MAS), termed Boltzmann Q-learning or \emph{smooth Q-learning}~\cite{Tuy03,Sat03}. Informally (see Section~\ref{sec:qlearning} for the rigorous definition), each agent $k$ updates their choice distribution $x=(x_i)$ according to the rule $\dot x_i/x_i= (r_i-\bar r)-T_k(\ln{x_i}-\sum_j x_j \ln{x_j})$, where $r_i,\bar r$ denote agent $k$'s rewards from action $i$ and average rewards, respectively, given all other agents' actions and $T_k$ is agent $k$'s exploration rate. \par
In our main result, we show convergence of Q-learning to Quantal Response Equilibria (QRE) (the prototypical extension of NE for games with bounded rationality \cite{Mck95}) in arbitrary networks of strictly competitive games \cite{Cai16}. As long as all exploration rates are positive, we prove via a global Lyapunov argument that the Q-learning dynamic converges pointwise to a unique QRE regardless of initial conditions and regardless of the number of the Nash equilibria of the original network game (\Cref{thm:main}). Related to the above, we demonstrate how exploration by all agents leads to equilibrium selection. This theoretically and empirically long-standing open problem (\cite{Kim96,Rom15,Per20}) becomes tractable due to the theoretical guarantees of fast convergence to QRE that we provide in \Cref{thm:main} for this class of (competitive) games.
In fact, \Cref{thm:main} is in some sense tight, as there exist network competitive settings whose dynamics lead to limit cycles as long as not all of the agents are performing exploration (see discussion and experiments in \Cref{exp:network}).

%Our experimental results in \Cref{exp:network} show that the statement of \Cref{thm:main} is tight, in the sense that exploration by only \emph{some} agents may not be sufficient to ensure convergence to a unique QRE. 

\paragraph{Other Related Works.} The variant of Q-learning that we study has recently received a lot of attention due to its connection to evolutionary game theory \cite{Sat05,Kia12,Wol12}. It has also been also extensively studied in the economics and reinforcement learning literature under various names, see e.g., \cite{Alo10,San18} and \cite{Kae96,Mer16}. Recent works demonstrate that it is possible to show convergence of the Q-learning dynamics in multi-agent \emph{cooperative} settings and to select highly desirable equilibria via bifurcations \cite{Leo21}. \par
On the other hand, competitive multi-agent systems constitute one of the current frontiers in Artificial Intelligence and Machine Learning research. Many recent works investigate the complex behavior of competitive game theoretic settings \cite{Rak13,Bal18,Bai19,Bai18,Mer19},
focusing on carefully designed convergent algorithms, e.g., ``optimism''~\cite{Das18, Das18l,wei2021linear}, extra-gradient methods~\cite{2018arXiv180702629M,antonakopoulos2020adaptive,hsieh2019convergence}, regularization~\cite{Per20}, momentum adapted dynamics~\cite{Gid19n}, or symplectic integration schemes~\cite{BaileyArxivVerlet}. 
However, despite the theoretical progress and the impressive results in the empirical front \cite{Sil16,Vin19,LAn20}, the literature on equilibrium selection seems to have received little attention so far. Prior works have focused on equilibrium selection in cooperative AI or have theoretically studied competition in 2-agent zero-sum games in which equilibrium selection is irrelevant \cite{Daf20,Das20}. To our knowledge, we are the first paper to explicitly study both theoretically and experimentally multi-agent competitive settings without uniquely defined values.

\section{Game-Theoretic Preliminaries}

A \emph{polymatrix or separable network game}, $\Gamma=\<(V,E), \<S_k,w_k\>_{k\in V},\<\A_{kl}\>_{[k,l]\in E}\>$ consists of a graph $(V,E)$, where $V=\{1,2,\dots,n\}$ is the set of players (or agents), and $E$ is a set of pairs, $[k,l]$, of players $k\neq l\in V$. Each player, $k\in V$, has a finite set of actions (or strategies) $S_k$ with generic elements $s_k\in S_k$ (depending on the context, sometimes we will also write $i$ or $j\in S_k$). Players may also use mixed strategies (or choice distributions) $\x_k=(x_{ki})_{i\in S_k}\in \Delta_k$, where $\Delta_k$ is the simplex in $\mathbb R^{|S_k|}$, i.e., $\sum_{i\in S_k} x_{ki}=1$, and $x_{ki}\ge 0$, for any $\x_k\in \Delta_k$. The \emph{interior} of $\Delta_k$ is the set of all points $\x_k\in\Delta_k$ with $x_{ki}\in(0,1)$ for all $i\in S_k$. All points of $\Delta_k$ that are not in the interior, are called \emph{boundary points}. We will write $\x =(\x_1,\x_2,\dots,\x_n)$ or $\x=(\x_k,\x_{-k})$ for a mixed strategy profile $\x \in \Delta:=\prod_{k\in V}\Delta_k$, where $\x_{-k}\in \Delta_{-k}=\prod_{l\neq k \in V} \Delta_l$ is the vector of mixed strategies of all players $l\in V$ other than $k$. \par

Each edge $[k,l]\in E$ defines a two player game with payoff matrices $\A_{kl}\in \mathbb R^{|S_k|\times |S_l|}$ and $\A_{lk}\in \mathbb R^{|S_l|\times |S_k|}$. The elements $a_{kl}(s_k,s_l)$ of a matrix $\A_{kl}$ denote the payoffs of player $k$ when players $k$ and $l$ use pure actions $s_k\in S_k$ and $s_l\in S_l$, respectively. Each player $k\in V$ chooses a strategy (mixed or pure) and plays that strategy in \emph{all} games $[k,l] \in E$. Thus, the payoff of player $k$ at the pure strategy profile $\s =(s_1,\dots,s_n) \in S:= \prod_{k\in V}S_k$ is $u_k(\s)=\sum_{[k,l]\in E}a_{kl}(s_k,s_l)$. Similarly, the expected reward of player $k$ in the mixed strategy profile $\x\in \Delta$ is  
\begin{equation}\label{eq:utility}
u_k(\x):=\sum\nolimits_{[k,l]\in E} \x_k^\top \A_{kl} \x_l = \x_k^\top \<\sum\nolimits_{[k,l]\in E}\A_{kl}\x_l\>.
\end{equation}
It will be convenient to write $r_{ki}(\x_{-k}):= u_k\<i,\x_{-k}\>=\sum_{[k,l]\in E}\{\A_{kl}\x_l\}_i$ (where $\{v\}_i$ to denotes the $i$-th element of a vector $v$), for the reward of pure action $i\in S_k$ of player $k$ when all other players use the strategy profile $\x_{-k}$ and $\rk:=(r_{ki}(\x_{-k}))_{i\in S_k}$ for the resulting reward vector of all pure actions of agent $k$, respectively. Using this notation, the expected reward of player $k\in V$ at the mixed strategy profile $\x=(\x_k,\x_{-k})$ can be compactly expressed as $u_k(\x)=\x_k^\top \rk$.\par
\paragraph{Weighted Zero-Sum Polymatrix Games.} $\Gamma$ is called a \emph{weighted or rescaled} zero-sum polymatrix game \cite{Cai16}, if there exist positive constants $w_1,w_2,\dots,w_n>0$, so that
\begin{equation}\label{eq:zsprop}
%\sum\nolimits_{k\in V}w_k\x_k^\top \rk=
\sum\nolimits_{k\in V}w_ku_k(\x)=0, \;\; \text{for all }\x \in \Delta.
\end{equation}
By summing over the edges in $E$ (instead of the players in $V$), we can equivalently express the weighted zero-sum property as 
\begin{equation}\label{eq:zsequivalent}
\sum\nolimits_{[k,l]\in E}\lt w_k\x_k^\top \A_{kl} \x_l +w_l \x_l^\top \A_{lk} \x_k\rt= 0, \;\; \text{for all }\x \in \Delta.
\end{equation}
\paragraph{Nash Equilibrium.} A strategy profile (tuple of mixed strategies), $\p=(p_k)_{k\in V}\in \Delta$, with $\p_k=(p_{ki})_{i\in S_k} \in \Delta_k$ for each $k\in V$ is a \emph{Nash equilibrium} of $\Gamma$ if 
\begin{equation}\label{eq:knash}
u_k(\p)\ge u_k(x_k,\p_{-k}), \quad \text{for all }x_k \in \Delta_k, \; \text{for all } k\in V,
\end{equation}
i.e., if there exist no profitable unilateral deviations. By linearity, it suffices to verify the condition in equation \eqref{eq:knash} only for pure actions $s_k\in S_k$ instead of all $x_k\in \Delta_k$. 

\section{Joint-learning Model: Q-learning Dynamics}\label{sec:qlearning}

We next discuss how we can get to the Q-learning dynamics from Q-learning agents when there are multiple learners in the system. Our goal is to identify the dynamics in competitive systems in which multiple agents are playing a rescaled zero-sum polymatrix game repeatedly over time. \par
\paragraph{Q-learning.} Q-learning \cite{Wat92,Wat89} is a value-iteration method for solving the optimal strategies in Markov Decision Processes (MDPs). It can be used as a model where users learn about their optimal strategy when facing uncertainties. In Q-learning, each agent $k\in \N$ keeps track of the past performance of their actions $i\in S_k$ via the Q-learning update rule
\begin{equation}\label{eq:update_q}
Q_{ki}(n+1)=Q_{ki}(n)+\alpha_k\lt r_{ki}(\x_{-k},n)-Q_{ki}(n)\rt,\;i \in S_k
\end{equation}
where the additional argument, $n\ge0$, denotes (discrete) time steps and $\alpha_k\in[0,1]$ denotes the learning rate or memory decay of agent $k$, cf. \cite{Sat03,Kia12}. $Q_{ki}(t)$ is called the \emph{memory} of agent $k$ about the performance of action $i\in S_k$ up to time step $t\ge0$. Agent $k\in V$ updates their actions (choice distributions) according to a Boltzmann type distribution, with 
\begin{equation}\label{eq:update_x}
x_{ki}(n)=\frac{\exp{(Q_{ki}(n)/T_k)}}{\sum_{j\in S_k}\exp{(Q_{kj}(n)/T_k)}},\quad \text{for each } i\in S_k,
\end{equation}
where $T_k\in[0,+\infty)$ denotes agent $k$'s learning sensitivity or adaptation, i.e., how much the choice distribution is affected by the past performance. We will refer to $T_k$ as the \emph{exploration rate} or \emph{temperature} of player $k$ \cite{Leo21,Tuy03} (see \Cref{rem:rates} for a discussion). Combining equations \eqref{eq:update_q} and \eqref{eq:update_x}, one obtains the recursive equation of player $k$'s mixed strategy (or choice distribution)
\begin{align*}
&x_{ki}(n+1)=\frac{x_{ki}(n)\exp{((Q_{ki}(n+1)-Q_{ki}(n))/T_k)}}{\sum_{j\in S_k}x_{kj}(n)\exp{((Q_{kj}(n+1)-Q_{kj}(n))/T_k)}}\,,
\end{align*}
for each $i\in S_k$. In practice, agents perform a large number of actions (updates of Q-values) for each choice-distribution (update of Boltzmann selection probabilities). This motivates to consider a continuous time version of the learning process for each agent $k\in V$ which results in the following update rules for both the memories $Q_{ki}$ and the selection probabilities $x_{ki}$ of each action $i\in S_k$
\[\dot Q_{ki}=\alpha_k\lt r_{ki}(\x_{-k})-Q_{ki} \rt, \; \text{ and }\; \dot x_{ki}=x_{ki}\left(\dot Q_{ki}-\sum\nolimits_{j\in S_k}\dot Q_{kj}\right)/T_k,\]
where we suppressed the dependence on (the continuous) time steps $t\ge0$. Combining the last two equations under the assumptions that the relationship between pairs of actions is constant over time and that the choice distributions of the various agents are independently distributed, yields the \emph{Q-learning dynamics}
\begin{equation}\label{eq:kdynamics}
\dot{x}_{ki}=\xk\lt r_{ki}(\x_{-k})-\x_k^\top\rk -T_k\< \ln{(\xk)}-\x_k^\top\ln{(\x_k)} \> \rt,
\end{equation}
for all $i \in S_k$ and all players $k\in V$. The next property provides intuition for the meaning of Q-values in a practical context (see also Remark \ref{rem:rates} for a discussion). As is the case for all technical materials of this section, the proof of \Cref{lem:qcontinuous} is deferred to Appendix A. 

\begin{proposition}[Q-value Updates and Exponential Discounting]\label{lem:qcontinuous}
Consider the Q-learning updates
\begin{equation}\label{eq:qupdates}
Q_i(n+1)=Q_i(n)+\alpha[r_i(n)-Q_i(n)]
\end{equation}
where $n\ge0$ are discrete time steps and assume that $Q_i(0)=0$ for all $i\in A$. Then, in continuous time, the updates are given by
\[Q_i(t)=\alpha \int_{0}^te^{-\alpha s}r_i(t-s) ds =\alpha e^{-\alpha t}\int_{0}^te^{\alpha s}r_i(s) \mathop{ds}, \quad \text{for any } t>0.\]
\end{proposition}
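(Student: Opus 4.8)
The plan is to treat the continuous-time update as the first-order linear ODE $\dot Q_i = \alpha\,(r_i(t) - Q_i)$ — the continuous analogue of \eqref{eq:qupdates} that underlies the derivation of \eqref{eq:kdynamics} in Section~\ref{sec:qlearning} — solve it in closed form by an integrating factor, and then cross-check the result against the discrete recursion to confirm it is the correct limiting object. First I would unroll \eqref{eq:qupdates}: from $Q_i(0)=0$, a short induction gives $Q_i(n)=\alpha\sum_{m=0}^{n-1}(1-\alpha)^{\,n-1-m}r_i(m)$, which already displays the geometric (exponential) discounting of past rewards and, upon replacing the geometric weight $(1-\alpha)^{\,n-1-m}$ by $e^{-\alpha s}$ and the sum by an integral, points directly to the claimed closed form.

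For the continuous-time identity itself, I would rewrite the ODE as $\dot Q_i + \alpha Q_i = \alpha r_i(t)$ and multiply by the integrating factor $e^{\alpha t}$ to get $\frac{d}{dt}\bigl(e^{\alpha t}Q_i(t)\bigr) = \alpha e^{\alpha t} r_i(t)$. Integrating from $0$ to $t$ and using $Q_i(0)=0$ yields $e^{\alpha t}Q_i(t) = \alpha\int_0^t e^{\alpha s} r_i(s)\,ds$, i.e. $Q_i(t) = \alpha e^{-\alpha t}\int_0^t e^{\alpha s} r_i(s)\,ds$, which is the second expression in the statement. The first expression then follows from the substitution $s \mapsto t-s$, under which $\alpha e^{-\alpha t}\int_0^t e^{\alpha s} r_i(s)\,ds$ becomes $\alpha\int_0^t e^{-\alpha s} r_i(t-s)\,ds$; this closes the argument.

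The only genuinely delicate point is not the ODE solution, which is routine, but the passage from the discrete recursion to the continuous-time ODE. I would handle it exactly as in the derivation of \eqref{eq:kdynamics}: interpreting the learning rate $\alpha$ as proportional to an infinitesimal time increment, the recursion $Q_i(n+1)-Q_i(n)=\alpha\,[r_i(n)-Q_i(n)]$ becomes, after time rescaling, $\dot Q_i = \alpha\,(r_i(t)-Q_i(t))$, and the discrete geometric weights $(1-\alpha)^{\,n}$ converge to the exponential kernel $e^{-\alpha t}$ — consistent with the unrolled sum above. With this identification in place, both closed-form expressions follow immediately from the integrating-factor computation.
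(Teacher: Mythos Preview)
Your proposal is correct and mirrors the paper's own proof almost exactly: the paper also unrolls the recursion to exhibit the geometric weights $\alpha(1-\alpha)^k$, passes to the continuous limit, and separately solves the ODE $\dot Q_i+\alpha Q_i=\alpha r_i$ via the integrating factor $e^{\alpha t}$, obtaining the second expression and recovering the first by a change of variables. There is nothing to add.
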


\begin{remark}\label{rem:rates}
Equation \eqref{eq:update_x} implies that higher values of $T_k$ indicate a higher exploration rate, i.e., randomization among the agent's available choices in $S_k$, whereas values of $T_k$ close to $0$ indicate a higher exploitation rate, i.e., preference of the agent towards the best performing action. Specifically, for $T_k\to0$, \eqref{eq:update_x} corresponds to the best-response update rule, and agent $k$ selects the action $i\in S_k$ with the highest $Q$-value with probability one. On the other hand, as $T_k\to \infty$, equation \eqref{eq:update_x} corresponds to the case in which player $k$ selects their actions uniformly at random. \par
From a behavioral perspective, this leads to an equivalent interpretation of the $T_k$'s as the degrees of (bounded) rationality of the agents \cite{EWA1,EWA2,Gal13}. Similarly, \Cref{lem:qcontinuous} implies that the Q-learning updates can be intuitively interpreted as exponential discounting in the payoffs, with payoffs further back in the past being less important. In other words, the Q-value of an agent for an action $i$ shows how beneficial the (discounted) strategy $i$ is for that agent. On the other hand, from a purely algorithmic perspective, the $T_k$'s and the corresponding entropy terms in the players' choice distributions updates can be viewed as regularization terms that prevent the learning dynamics from reaching the boundary or from getting trapped in local optima, see \cite{Bow02,Cou15,Mer18} and \cite{Kai10,Kai11}.
\end{remark}

\paragraph{Quantal Response Equilibria.}
Using the convention $x\ln{x}:=0$ whenever $x=0$ (recall that $\lim_{x\to 0^+}x\ln{x}=0$), it is immediate to infer from equation \eqref{eq:kdynamics} that $x_{ki}(t)=x_{ki}(0)$ for all $t>0$, for all $i\in S_k$ such that $x_{ki}(0)\in\{0,1\}$. In other words, if the dynamics start on the boundary, then they will remain there, i.e., the boundary of $\Delta_k$ is invariant for the Q-learning dynamics. This implies, in particular, that all corner points of the simplex $\Delta_k$, i.e, all points $x_k=(x_{ki})_{i\in S_k}$ with one coordinate equal to $1$ (and consequently, all other coordinates equal to $0$) are (trivially) fixed points of the Q-learning dynamics. Thus, the interesting part concerns the fixed-point analysis of the dynamics for \emph{interior} starting points. \par
The \emph{interior} fixed points of the Q-learning dynamics are called \emph{Quantal Response Equilibria}. From equation \eqref{eq:kdynamics}, it is apparent that whenever $T_k>0$ for all $k\in V$, then such points can only lie in the interior of $\Delta_k$ due to the entropy term ($\ln$ of the choice probabilities). Thus, at such fixed points the rate of change of $x_{ki}$ must be equal to zero. Specifically, a strategy profile $\p=(p_k)_{k\in V}$ with $\p_k=(p_{ki})_{i\in S_k}$ for each $k\in V$ is a \emph{Quantal Response Equilibrium (QRE)} of $\Gamma$ if
\begin{equation}\label{eq:kqre}
r_{ki}(\p_{-k})=\p_k^\top \rp +T_k\< \ln{(p_{ki})}-\p_k^\top\ln{(\p_k)} \> , \;\; \text{for all } k\in V.
\end{equation}
The following properties are immediate from the characterization of the QRE in equation \eqref{eq:kqre}.
% \begin{theorem}[Interior Fixed Points of the Q-learning Dynamics]\label{thm:preliminary} Consider the Q-learning dynamics with
% \[\dot\xk=\xk\lt r_{ki}(\x_{-k})-\x_k^\top\rk -T_k\< \ln{(\xk)}-\x_k^\top\ln{(\x_k)} \> \rt,\;\; i\in S_k, k\in V,\] 
% in an arbitrary game $\Gamma$ with $T_k>0$ for all players $k\in V$. Then, the interior fixed points, $\p=(p_k)_{k\in V}$ with $\p_k=(p_{ki})_{i\in S_k}$ for each $k\in V$, of the Q-learning dynamics are the solutions of the system  
% \begin{equation}\label{eq:qre}
% p_{ki}=\frac{\exp{\<r_{ki}(\p_{-k})/T_k\>}}{\sum_{j\in S_k}\exp{\<r_{kj}(\p_{-k})/T_k\>}}, \quad \text{for all } i\in S_k.
% \end{equation}
% Such fixed points always exists and coincide with the Quantal Response Equilibria (QRE) of $\Gamma$. Moreover, for each such fixed point, $\p \in \Delta$, it holds that 
% \begin{equation}\label{eq:eqprop}
% (\x_k-\p_k)^\top \lt r_{k}(\p_{-k})-T_k \ln{(\p_k)}\rt=0.
% \end{equation}
% for all $x_k\in \Delta_k$, and all $k\in V$.
% \end{theorem}
\begin{theorem}[Interior Fixed Points of the Q-learning Dynamics]\label{thm:preliminary} Let $\Gamma$ be an arbitrary game, with positive exploration rates $T_k$ and consider the associated Q-learning dynamics
\[\dot\xk=\xk\lt r_{ki}(\x_{-k})-\x_k^\top\rk -T_k\< \ln{(\xk)}-\x_k^\top\ln{(\x_k)} \> \rt,\;\; i\in S_k, k\in V.\] 
The interior fixed points, $\p=(p_k)_{k\in V}$ of the Q-learning dynamics are the solutions of the system  
\begin{equation}\label{eq:qre}
p_{ki}=\frac{\exp{\<r_{ki}(\p_{-k})/T_k\>}}{\sum_{j\in S_k}\exp{\<r_{kj}(\p_{-k})/T_k\>}}, \quad \text{for all } i\in S_k.
\end{equation}
Such fixed points always exists and coincide with the Quantal Response Equilibria (QRE) of $\Gamma$. Given any such fixed point $\p$, we have, for all $\x_k\in \Delta_k$ and for all $k\in V$, that
\begin{equation}\label{eq:eqprop}
(\x_k-\p_k)^\top \lt r_{k}(\p_{-k})-T_k \ln{(\p_k)}\rt=0.
\end{equation}
\end{theorem}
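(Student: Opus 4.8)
The plan is to establish the four assertions in the order they are stated: (i) the Boltzmann characterization \eqref{eq:qre} of the interior fixed points; (ii) existence of such a point; (iii) their identification with the QRE of $\Gamma$; and (iv) the variational identity \eqref{eq:eqprop}.

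For (i), I would start from the observation that at an interior fixed point $\p$ every $p_{ki}$ is strictly positive, so $\dot p_{ki}=0$ is equivalent to the vanishing of the bracket in the dynamics for each $i\in S_k$, namely
\[
r_{ki}(\p_{-k})-\p_k^\top\rp-T_k\(\ln p_{ki}-\p_k^\top\ln\p_k\)=0.
\]
Solving this for $\ln p_{ki}$ gives $\ln p_{ki}=r_{ki}(\p_{-k})/T_k-c_k$ with $c_k:=\big(\p_k^\top\rp\big)/T_k-\p_k^\top\ln\p_k$ independent of $i$; exponentiating and using $\sum_{i\in S_k}p_{ki}=1$ to eliminate $c_k$ yields \eqref{eq:qre}. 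For the converse, I would take logarithms in \eqref{eq:qre}, multiply by $p_{ki}$ and sum over $i$ to identify the log-partition term $T_k\ln\big(\sum_{j\in S_k}\exp(r_{kj}(\p_{-k})/T_k)\big)=\p_k^\top\rp-T_k\p_k^\top\ln\p_k$, and substitute it back to see that the bracket vanishes; since \eqref{eq:qre} also forces $p_{ki}>0$, such a $\p$ is indeed an interior fixed point.

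For (ii), I would introduce the Boltzmann map $F:\Delta\to\Delta$ defined componentwise by $F(\x)_{ki}=\exp(r_{ki}(\x_{-k})/T_k)\big/\sum_{j\in S_k}\exp(r_{kj}(\x_{-k})/T_k)$. Because each $r_{ki}$ depends continuously (in fact multilinearly) on $\x_{-k}$ and the denominators are strictly positive, $F$ is continuous, and by construction $F(\x)\in\Delta$ (indeed in its interior). Brouwer's fixed-point theorem on the nonempty, compact, convex set $\Delta$ then produces a fixed point of $F$, which by part (i) is an interior fixed point of the dynamics. Assertion (iii) is then a matter of reading off definitions: the vanishing-bracket condition displayed above is exactly \eqref{eq:kqre} rearranged, and since positive temperatures make the entropy term preclude any fixed point on the boundary (as noted just before the theorem), interior fixed points and QRE coincide. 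Finally, for (iv) I would rewrite the vanishing-bracket condition as $r_{ki}(\p_{-k})-T_k\ln p_{ki}=\p_k^\top\rp-T_k\p_k^\top\ln\p_k=:\gamma_k$, which is independent of $i$, so the vector $\rp-T_k\ln\p_k$ equals $\gamma_k\mathbf 1$; contracting with $\x_k-\p_k$ and using $\sum_{i\in S_k}(x_{ki}-p_{ki})=0$ for any $\x_k\in\Delta_k$ gives \eqref{eq:eqprop}.

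The only step requiring genuine care rather than bookkeeping is (ii): one must check that $F$ is well-defined and continuous on all of $\Delta$ and that its image lies in $\Delta$, so that Brouwer applies; the remaining steps are algebraic rearrangements, handled comfortably with the convention $x\ln x=0$ at $x=0$ and the fact that a difference of probability vectors is orthogonal to the all-ones vector. Note that uniqueness of the fixed point is deliberately not claimed here — that is where the weighted zero-sum structure will later enter.
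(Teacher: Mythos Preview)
Your proposal is correct and follows essentially the same route as the paper: solve the vanishing-bracket condition for $\ln p_{ki}$ and normalize to obtain the softmax form, invoke Brouwer on the continuous Boltzmann map for existence, and derive \eqref{eq:eqprop} from the fact that $r_k(\p_{-k})-T_k\ln\p_k$ is constant across $i$. If anything you are slightly more thorough than the paper, since you also spell out the converse direction in (i) and make the orthogonality-to-$\mathbf 1$ argument explicit in (iv).
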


The proof of \Cref{thm:preliminary} does not make use of the weighted zero-sum property, cf. equation \eqref{eq:zsprop}, and thus, it holds for arbitrary network games (not necessarily weighted network zero-sum). 

\paragraph{Exploration Rates.} 
Equation \eqref{eq:eqprop} in \Cref{thm:preliminary} is critical for the main technical step in our proof of convergence of the Q-learning dynamics in weighted zero-sum polymatrix games. Thus, it is important to note that this equation only holds when \emph{all} exploration rates, $T_k,k\in V$ are positive (and not merely non-negative). The intuition is that when all $T_k>0$, every QRE becomes interior, i.e., has full support. This implies that all deviations (to pure or mixed strategies) give the \emph{same} expected reward to the deviating agent. If some $T_k$ are equal to $0$, then \eqref{eq:eqprop} holds with (non-strict) inequality which is not generally enough for the proof of \Cref{thm:main}.\par

When $T_k=0$ for all $k\in V$, the Q-learning dynamics reduce to the standard replicator dynamics and the QRE coincide with the Nash equilibria of $\Gamma$, cf. equation \eqref{eq:knash}. The case in which only some $T_k$ are equal to $0$ exhibits more interesting behavior and will be discussed in detail later. In brief, if this is the case, then the dynamics may also converge to a boundary point (for the non-exploring players, i.e., for the players $k\in V$ with $T_k=0$), even if the dynamics start in the interior.

\section{Convergence of Q-learning in Weighted Zero-sum Polymatrix Games}\label{sec:convergence}

Next, we consider the joint learning model (Q-learning) of the previous section in rescaled zero-sum polymatrix games, $\Gamma$, as defined above. Our main result in this section is that Q-learning converges to the QRE of $\Gamma$. The key step of the proof is to show that the \emph{distance} between an interior QRE, $\p\in \Delta$ and the sequence of play, $\x(t),t\ge0\in \Delta$, that is generated by the Q-learning model is monotonically decreasing. To measure this distance in a meaningful way, we will use the notion of KL-divergence which is formally defined next.

\begin{definition}[Kullback-Leibler (KL) Divergence]\label{def:kldivergence} 
The Kullback-Leibler or \emph{KL-Divergence} (also called \emph{relative entropy}), $\KL$, between two strategy profiles $\p=(\p_k)_{k\in V}, \x(t)=(\x_k(t))_{k\in V} \in \Delta$ with $\p_k=(p_{ki})_{i\in S_k}$ and $\x_k(t)=(\xk(t))_{i\in S_k} \in \Delta_k$ for all $k\in V$, is defined as 
\begin{equation}\label{eq:defkl}
\KL\<\p \|\x(t)\>:=\sum\nolimits_{k\in V}\KL(\p_k\|\x_k(t)) =\sum\nolimits_{k\in V}\p_{k}^\top \ln{\<\frac{\p_{k}}{\x_{k}(t)}\>},
\end{equation}
where $\p_{k}^\top \ln{\<\frac{\p_{k}}{\x_{k}(t)}\>}=\sum_{i\in S_k}p_{ki} \ln{(p_{ki}/x_{ki}(t))}$. If $\w=(w_k)_{k\in V}$ is a $k$-dimensional vector of positive scalars, then the \emph{weighted or rescaled KL-divergence}, $\wKL$, is defined as 
\begin{equation}\label{eq:weightedkl}
\wKL\<\p \|\x(t)\>:=\sum\nolimits_{k\in V}w_k\KL(\p_k\|\x_k(t)).
\end{equation}
\end{definition}
The KL-divergence between $\p$ and $\x(t)$ can be thought of as a measurement of how far the distribution $\p$ is from the distribution $\x(t)$. However, the KL-divergence is not symmetric, i.e., in general it holds that $\KL(\p\|\x(t))\neq\KL(\x(t)\|\p)$. 

%However, the following property will turn out to be useful.

% \begin{property}\label{prop:symmetric}
% Let $\p_k,\x_k\in \Delta_k$. Then, it holds that 
% \[\KL(\p_k\|\x_k)+\KL(\x_k\|\p_k)=(\x_k-\p_k)^\top\lt \ln{(\x_k)}-\ln{(\p_k)}\rt.\]
% \end{property}
% \begin{proof}
% It is immediate to check that 
% \begin{align*}
% (\x_k-\p_k)^\top\lt \ln{(\x_k)}-\ln{(\p_k)}\rt&=\x_k^\top\ln{\<\frac{\x_k}{\p_k}\>}-\p_k^\top\ln{\<\frac{\x_k}{\p_k}\>}\\
% &=\x_k^\top\ln{\<\frac{\x_k}{\p_k}\>}+\p_k^\top\ln{\<\frac{\p_k}{\x_k}\>}\\[0.15cm]
% &=\KL(\x_k\|\p_k)+\KL(\p_k\|\x_k),
% \end{align*}
% as claimed.
% \end{proof}

\paragraph{Main Result.} Using the notion of KL-divergence we can now formulate our main result.

\begin{theorem}[Convergence of Q-learning dynamics]\label{thm:main}
Let $\Gamma$ be a rescaled zero-sum polymatrix game, with positive exploration rates $T_k$.  There exists a unique QRE $\p$ such that if $\x(t)$ is any trajectory of the associated Q-learning dynamics $\dot \x = f(\x)$,  with $f_i$ is given via \eqref{eq:kdynamics}, where $\x(0)$ is an interior point, then $\x(t)$ converges to $\p$ exponentially fast.  In particular, we have that 
\begin{equation}\label{eqn:KLlyapunov}
\frac{d}{dt}\wKL(\p\|\x(t))=-\sum\nolimits_{k\in V}w_kT_k\lt \KL(\p_k\|\x_k)+\KL(\x_k\|\p_k) \rt.
\end{equation}
%In particular, $\p$ is the unique QRE of the system, and the state of the system $\x(t)$ converges to $\p$ exponentially fast.
\end{theorem}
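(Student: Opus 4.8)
The plan is to run a single global Lyapunov argument with the weighted KL-divergence $\wKL(\p\|\x(t))$, where $\p$ is a QRE — equivalently, an interior fixed point of the dynamics — whose existence is guaranteed by \Cref{thm:preliminary}. First I would differentiate along a trajectory. From $\wKL(\p\|\x(t))=\sum_{k\in V}w_k\sum_{i\in S_k}p_{ki}\,(\ln p_{ki}-\ln x_{ki}(t))$ one gets $\frac{d}{dt}\wKL(\p\|\x(t))=-\sum_{k\in V}w_k\sum_{i\in S_k}p_{ki}\,\dot x_{ki}/x_{ki}$; substituting the Q-learning field \eqref{eq:kdynamics} and using $\sum_{i\in S_k}p_{ki}=1$ collapses the inner sum to $(\p_k-\x_k)^\top\rk-T_k(\p_k-\x_k)^\top\ln\x_k$. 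Writing $v_k:=\p_k-\x_k$ (a zero-sum vector for every $k$), this reads $\frac{d}{dt}\wKL(\p\|\x(t))=-\sum_{k\in V}w_k\big[v_k^\top\rk-T_kv_k^\top\ln\x_k\big]$.

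The crux is to convert the payoff term $\sum_{k\in V}w_kv_k^\top\rk$ — which a priori couples all players through the off-equilibrium profile $\x_{-k}$ — into a quantity involving only $\p$. I would split $\rk=\rp+\sum_{[k,l]\in E}\A_{kl}(\x_l-\p_l)=\rp-\sum_{[k,l]\in E}\A_{kl}v_l$, so that, after reindexing the resulting double sum over edges exactly as in \eqref{eq:zsequivalent},
\[
\sum_{k\in V}w_kv_k^\top\rk=\sum_{k\in V}w_kv_k^\top\rp-\sum_{[k,l]\in E}\big[w_kv_k^\top\A_{kl}v_l+w_lv_l^\top\A_{lk}v_k\big].
\]
The last sum is the value at $\p-\x$ of the quadratic form appearing in \eqref{eq:zsequivalent}, which vanishes on $\Delta$; restricting it to a segment $[\p,\x]\subseteq\Delta$ makes $t\mapsto$ (that value at $\p+t(\x-\p)$) a quadratic polynomial that is identically zero, so its leading coefficient — namely that sum — is $0$. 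For the first term I would apply \eqref{eq:eqprop} with $\x_k=\x_k(t)$, which gives $v_k^\top\rp=T_kv_k^\top\ln\p_k$ for every $k$; this is the one place the hypothesis that \emph{all} $T_k>0$ is used. Hence $\sum_{k\in V}w_kv_k^\top\rk=\sum_{k\in V}w_kT_kv_k^\top\ln\p_k$, and substituting back,
\[
\frac{d}{dt}\wKL(\p\|\x(t))=-\sum_{k\in V}w_kT_k\,(\p_k-\x_k)^\top(\ln\p_k-\ln\x_k)=-\sum_{k\in V}w_kT_k\big[\KL(\p_k\|\x_k)+\KL(\x_k\|\p_k)\big],
\]
the last step being the elementary identity $\sum_{i\in S_k}(p_{ki}-x_{ki})\ln(p_{ki}/x_{ki})=\KL(\p_k\|\x_k)+\KL(\x_k\|\p_k)$. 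This is \eqref{eqn:KLlyapunov}.

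Everything else is routine. Interior initial data keeps $\x(t)$ interior (the boundary of $\Delta$ is invariant), so $\wKL(\p\|\x(t))$ is finite and, by the identity, nonincreasing; discarding the nonnegative terms $\KL(\x_k\|\p_k)$ and using $T_k\ge\min_jT_j>0$ yields $\frac{d}{dt}\wKL(\p\|\x(t))\le-(\min_jT_j)\,\wKL(\p\|\x(t))$, so Gr\"onwall's inequality gives $\wKL(\p\|\x(t))\le e^{-(\min_jT_j)t}\,\wKL(\p\|\x(0))$ and Pinsker's inequality upgrades this to exponentially fast convergence $\x(t)\to\p$. Uniqueness of the QRE then falls out: for any interior QRE $\q$, the constant trajectory $\x(t)\equiv\q$ solves the dynamics, so the left side of \eqref{eqn:KLlyapunov} is $0$ while the right side equals $-\sum_{k\in V}w_kT_k[\KL(\p_k\|\q_k)+\KL(\q_k\|\p_k)]$, forcing $\q=\p$.

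\textbf{Main obstacle.} The only genuine difficulty is the payoff term: on its own $\sum_{k\in V}w_kv_k^\top\rk$ is not sign-definite, and it becomes usable only because (i) the weighted zero-sum identity \eqref{eq:zsequivalent} annihilates the quadratic cross-term in $\p-\x$, and (ii) the strict equilibrium relation \eqref{eq:eqprop} — valid precisely when every $T_k>0$ — trades the leftover $v_k^\top\rp$ for the entropy gradient $T_kv_k^\top\ln\p_k$, which is exactly what makes the symmetric $\KL(\p_k\|\x_k)+\KL(\x_k\|\p_k)$ appear. Once \eqref{eqn:KLlyapunov} is established, boundedness, Gr\"onwall, Pinsker, and uniqueness are all standard.
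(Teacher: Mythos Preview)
Your proof is correct and follows the same overall Lyapunov strategy as the paper: differentiate $\wKL(\p\|\x(t))$ along the flow, use the QRE relation \eqref{eq:eqprop} to bring in $\ln\p_k$, and recognize the symmetrized KL term; then Gr\"onwall (via $\dot\Phi\le-(\min_kT_k)\Phi$) and the fixed-point argument for uniqueness are identical to what the paper does in the appendix.

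The one genuine difference is in how the payoff cross-term is killed. The paper first proves \Cref{lem:tderivative} and then needs \Cref{lem:summation}, namely $\sum_{k\in V}w_k[\x_k^\top r_k(\p_{-k})+\p_k^\top r_k(\x_{-k})]=0$, whose proof in the appendix invokes the Cai et al.\ payoff-equivalent transformation of a rescaled zero-sum polymatrix game into a pairwise constant-sum game (\Cref{prop:transformation}). You bypass this entirely: by writing $\rk=\rp-\sum_{[k,l]\in E}\A_{kl}v_l$ you isolate the bilinear term $\sum_{[k,l]\in E}[w_kv_k^\top\A_{kl}v_l+w_lv_l^\top\A_{lk}v_k]$ and kill it with a one-line polynomial argument --- the map $t\mapsto\sum_k w_ku_k(\p+t(\x-\p))$ is a quadratic that vanishes on $[0,1]$, so its $t^2$-coefficient is zero. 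This is more elementary and fully self-contained: it uses only \eqref{eq:zsequivalent} and avoids the external structural result. The paper's route, on the other hand, makes the two-player intuition (where \Cref{lem:summation} is immediate) more transparent and records a cross-profile identity that may be of independent interest.
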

\begin{proof}[Sketch of Proof]
Let $\p=(\p_k)$ be a QRE equilibrium for the Q-learning dynamics (existence of $\p$ is guaranteed via \Cref{thm:preliminary}), and let $\x(t)$ be a trajectory with $\x(0)$ an interior point. We will first establish  \eqref{eqn:KLlyapunov}, from which the other statements follow.  The first step is to derive an explicit formula for the time derivative of the KL divergence between $\p_k$ and $\x_k(t)$ for a given player $k$.
\begin{lemma}\label{lem:tderivative}
For any QRE equilibrium $\p$ and any player $k\in V$ we have that 
\begin{equation*}
\frac{d}{dt}\KL(\p_k\|\x_k(t))=(\x_k-\p_k)^\top \lt \rk-r_{k}(\p_{-k})\rt-T_k\lt \KL(\p_k\|\x_k)+\KL(\x_k\|\p_k) \rt .
\end{equation*}
\end{lemma}
 It follows from Lemma~\ref{lem:tderivative} that
\begin{align}
&\frac{d}{dt}\KL(\p\|\x(t))=~\sum_{k\in V}\frac{d}{dt}w_k\KL(\p_k\|\x_k) \nonumber\\
&=~\sum_{k\in V}w_k\lt (\x_k-\p_k)^\top \lt \rk-r_{k}(\p_{-k})\rt-T_k\< \KL(\p_k\|\x_k)+\KL(\x_k\|\p_k) \>\rt \nonumber\\
&=~\sum_{k\in V}w_k(\x_k-\p_k)^\top \lt \rk-r_{k}(\p_{-k})\rt-\sum_{k\in V}w_kT_k\lt \KL(\p_k\|\x_k)+\KL(\x_k\|\p_k) \rt.\label{eqn:proof:DKL}
\end{align}
Notice that \eqref{eqn:KLlyapunov} follows from \eqref{eqn:proof:DKL} if the first term vanishes.  This can be shown by utilizing the following result regarding rescaled zero-sum polymatrix games.
\begin{lemma}\label{lem:summation}
Given any QRE equilibrium $\p=(\p_k)$ and $\x=(\x_k)$ we have that
\begin{equation}\label{eq:summation}
\sum\nolimits_{k\in V}w_k\lt \x_k^\top \rp+\p_k^\top\rk\rt=0.
\end{equation}
\end{lemma}
Returning to \eqref{eqn:proof:DKL}, we have that the first term in the right hand side of the last equation vanishes since
\begin{align*}
\sum_{k\in V}w_k(\x_k-\p_k)^\top \lt \rk-r_{k}(\p_{-k})\rt&~=~\sum_{k\in V}w_k\x_k^\top \rk+\sum_{k\in V}w_k\p_k^\top r_k(\p_{-k})\\
&~~+\sum_{k\in V}w_k\lt \x_k^\top \rp+\p_k^\top\rk\rt=0,
\end{align*}
where the last equality follows from both the zero-sum property \eqref{eq:zsprop} and \Cref{lem:summation}. Thus,
\begin{align*}
\frac{d}{dt}\KL(\p\|\x(t))=-\sum\nolimits_{k\in V}w_kT_k\lt \KL(\p_k\|\x_k)+\KL(\x_k\|\p_k) \rt.
\end{align*}
This equality implies that $\wKL$ is a Lyapunov function for the Q-learning dynamics.  Moreover, it implies that $\p$ is unique, and that $\x(t)$ is converging to $\p$ at an exponential rate.
\end{proof}

% It follows from
% the properties of KL-divergence, QREs, and weighted zero-sum polymatrix games (see Properties~\ref{prop:symmetric}--\ref{prop:transformation} and Lemmas~\ref{lem:tderivative}--\ref{lem:summation}) that
% \begin{equation}\label{eqn:phiexpstability}
% \dot  \Phi(x) \leq  -\min_k T_k \cdot \Phi(x),
% \end{equation}
% and moreover that $\Phi(\x)$ obeys the conditions i) $\Phi(\x)=0$ if and only if $\x=\p$, ii) $\Phi(\x)> 0$ if and only if $\x\neq \p$ and iii) $\dot \Phi(\x) <0$ for $\x\neq \p$, i.e., $\Phi$ is a Lyapunov function for the Q-learning dynamics with zero at $\p$.  Therefore the Q-learning dynamics convergers to the (unique) QRE $\p$.  The exponential convergence to the QRE follows from \eqref{eqn:phiexpstability}.
\paragraph{Equilibrium Selection.}
Theorem~\ref{thm:main} suggests a tractable, algorithmic approach to the problem of equilibrium selection in weighted zero-sum polymatrix games.  In particular, given an interior initial condition and positive exploration rates $T_k$, Theorem~\ref{thm:main} guarantees the system will converge to a unique QRE.  Moreover, it will do so exponentially fast.  Thus, one can use the following exploration policy set up a fast-slow system for equilibrium selection:  for each player, the exploration rates $T_k$ are brought to zero slowly (the slow system), while the Q-learning dynamics (the fast system) converge at an exponentially rate to the unique QRE; this process will approximate a Nash equilibrium of $\Gamma$.

\section{Experiments: Equilibrium Selection in Competitive Games}\label{exp:zerosum}

We study the performance of Q-learning in (weighted) network zero-sum and (as a subcase thereof) in 2-agent zero-sum games. We also investigate numerically the case in which only \emph{some} of the agents have positive exploration rates.

\paragraph{Two-agent Weighted Zero-Sum Games.}

The visualizations that we obtain in this low-dimensional case allow us to build intuition that carries over to the higher dimensional cases (with more agents) that we treat later.\\[0.2cm]
\textbf{Experimental setup:} We consider the Asymmetric Matching Pennies (AMPs) (a variation of the well-known Matching Pennies game \cite{Wun10}) which is a 2-agent, weighted zero-sum game. Each agent has two actions $\{H,T\}$ and the payoff matrices are $
\A=\begin{pmatrix}2 & -2\\ 0 & \phantom{-}2\end{pmatrix}, \B=\begin{pmatrix} \phantom{-}4 & \phantom{-}0 \\ -4 & -4\end{pmatrix}$. The AMPs game is a weighted zero-sum game since $\A+0.5\cdot \B^{\top}=0$ (each agent is the row agent in their matrix), with a unique interior Nash equilibrium, $(\p,\q)=((1/3,2/3), (2/3,1/3))$.

\paragraph{Results:} In \Cref{fig:surface}, we visualize the QRE surfaces (light blue manifolds) for different exploration rates $T_x,T_y$ of the two agents ($x-y$ plane). 
\begin{figure}[!tb]
    \centering
    \includegraphics[width=0.42\linewidth]{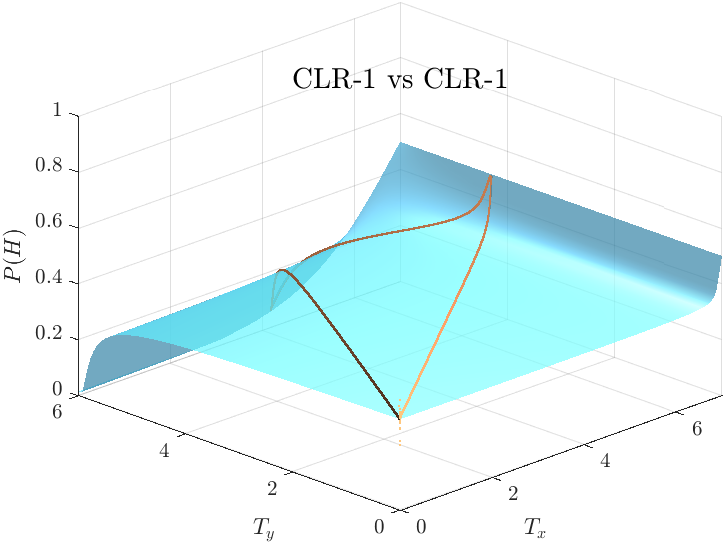}\hfill
    \raisebox{-0.5cm}{\includegraphics[width=0.49\linewidth]{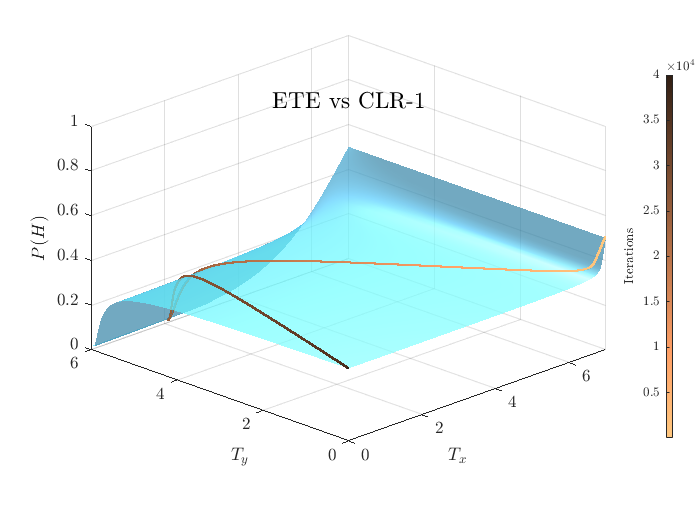}}\\
    \includegraphics[width=0.23\linewidth]{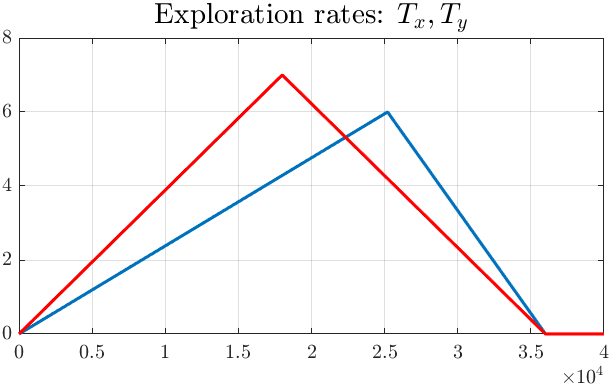}\hspace{2pt}
    \includegraphics[width=0.23\linewidth]{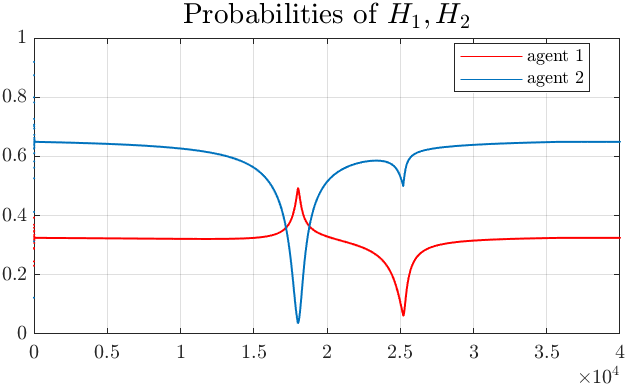}\hfill
    \includegraphics[width=0.23\linewidth]{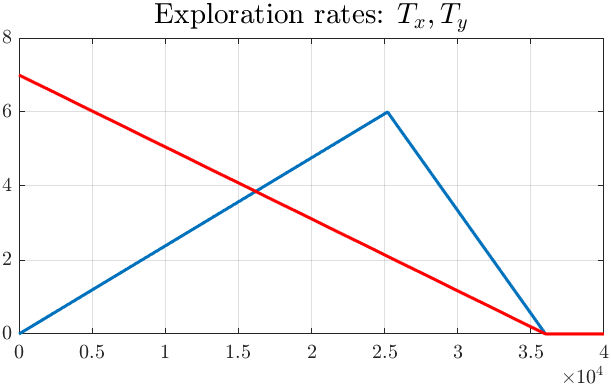}\hspace{2pt}
    \includegraphics[width=0.23\linewidth]{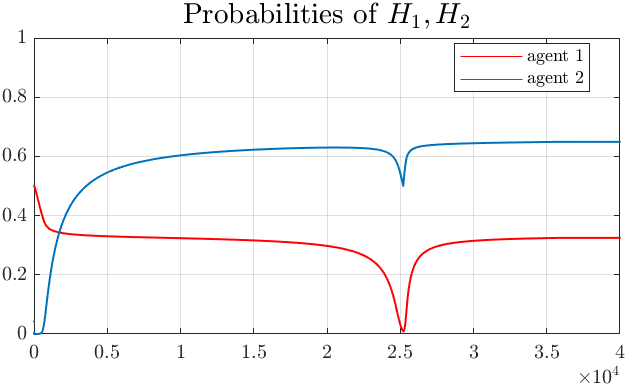}
    %\hspace{3pt}
    %\includegraphics[width=0.32\linewidth]{}
    \caption{QRE surface and exploration paths (upper panels) for two different exploration policy profiles (lower panels) in the Asymmetric MPs game. For any combination of exploration policies (CLR-1 and ETE), the sequence of play converges to the unique QRE and as the exploration rates decrease zero, the sequence of play converges to the unique Nash equilibrium of the game.}
    \label{fig:surface}
\end{figure}
The vertical axis shows the probability with which agent $1$ chooses $H$ at the unique QRE of the game. We plot the exploration path along two representative exploration-exploitation policies: \emph{Explore-Then-Exploit} (ETE) \cite{Bai20}, which starts with (relatively) high exploration that gradually reduces to zero and \emph{Cyclical Learning Rate with one cycle} (CLR-1) \cite{Smi17}, which starts with low exploration, increases to high exploration around the half-life of the cycle and then decays to zero. For each pair of exploration rates, the learning dynamics converge to the unique QRE that corresponds to these exploration rates. As the exploration rates decay to zero, the dynamics converge to (i.e., select) the unique QRE (i.e, the Nash equilibrium in this case) of the original game. 
\begin{figure}[b]
    \centering
    \includegraphics[width=0.24\linewidth]{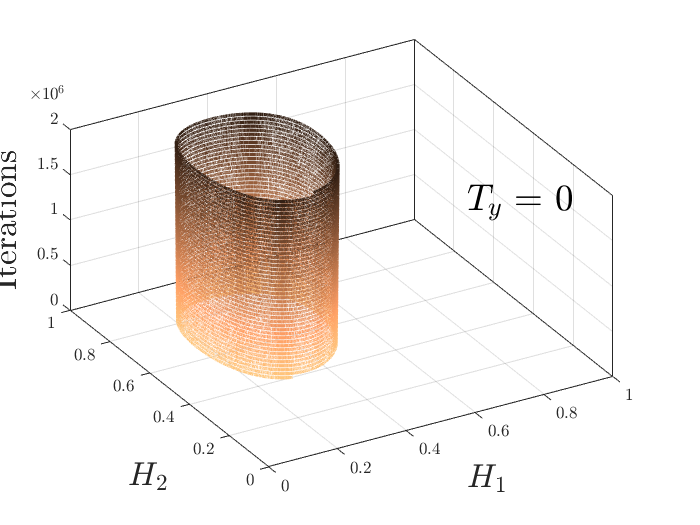}\hspace{-3pt}
    \includegraphics[width=0.24\linewidth]{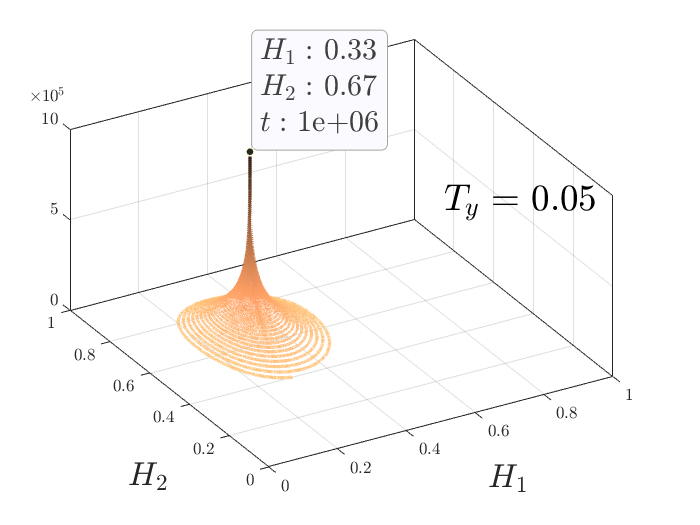}\hspace{-3pt}
    \includegraphics[width=0.24\linewidth]{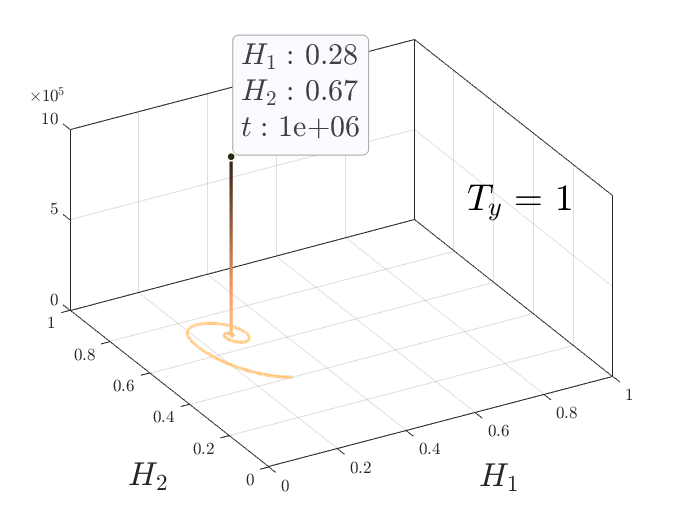}\hspace{-3pt}
    \includegraphics[width=0.24\linewidth]{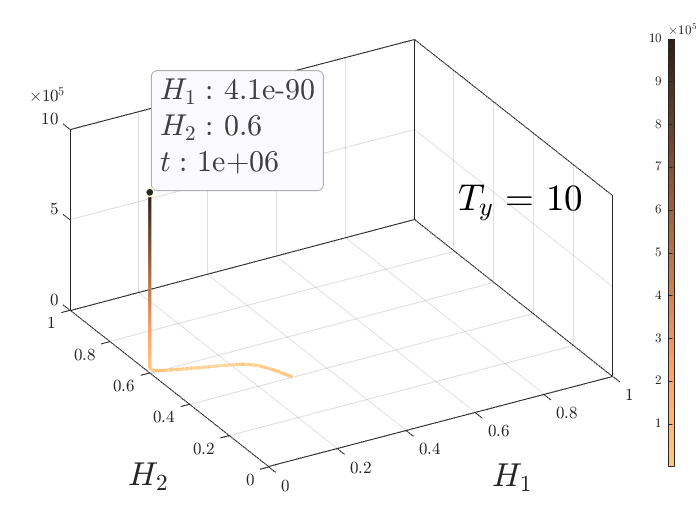}
    \caption{Q-learning dynamics in the AMPs game for $T_x=0$ (no exploration by $x$-agent) and four different exploration rates, $T_y\ge0$ by $y$-agent.}
    \label{fig:zero}
\end{figure}

By contrast, the first panel of \Cref{fig:zero} shows that the dynamics cycle around the unique Nash equilibrium when both agents do not use exploration. The rest of the panels of \Cref{fig:zero} show that in this case, exploration by only one agent suffices to lead the joint-learning dynamics to convergence. In this case, for higher values of exploration by the exploring agent, the QRE component of the non-exploring agent may lie at the boundary (see panel 4 in \Cref{fig:zero} and Appendix C).

% \paragraph{Edge case: exploration by one agent}
% The Q-learning dynamics for $T_x=0$ and various values of $T_y$ are shown in \Cref{fig:spoiled_qlearning}.

\paragraph{Network Zero-Sum Games.}\label{exp:network}

As the next experiment shows, equilibrium selections works also in larger networks, provided that all agents maintain a positive exploration rate. In contrast to the previous example, in this case, exploration by only some agents is not sufficient for convergence.

\paragraph{Experimental setup:} We consider a zero-sum polymatrix game (ZSPG) with $n+2$ agents ($n\in \mathbb{N}$ is arbitrary) which is depicted in \Cref{fig:network_zero_sum}. 
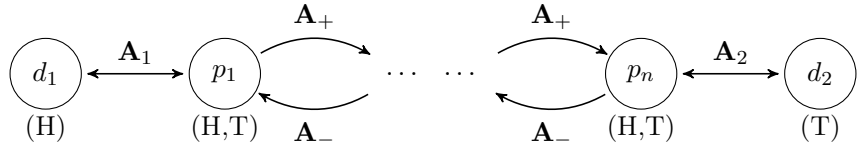
\begin{figure}[!htb]\centering
\begin{scaletikzpicturetowidth}{0.7\linewidth}
%\vspace{-2.5cm}
\begin{tikzpicture}[state/.style={circle, draw, minimum size=1cm}, scale=\tikzscale, 
every node/.append style={transform shape}]
\node[state, label={[shift={(0,-1.6)}] (H) }] (d1) {$d_1$};
\node[state, right =1.5cm of d1, label={[shift={(0,-1.6)}] (H,T) }] (p1) {$p_1$};
\node[state, draw=none, right =1.5cm of p1] (dt1) {$\dots$};
\node[state, draw=none, right =-0.2cm of dt1] (dt2) {$\dots$};
\node[state, right =1.5cm of dt2,label={[shift={(0,-1.6)}] (H,T) }] (pn) {$p_n$};
\node[state, right =1.5cm of pn, label={[shift={(0,-1.6)}] (T) }] (d2) {$d_2$};

\draw[shorten <=2pt, shorten >=2pt]
(d1) edge[<->] node[above, midway] (e1) {$\A_1$} (p1)
%(p1) edge node[below=0.05cm, midway] (e2) {} (d1)
(pn) edge[<->] node[above, midway] (e7) {$\A_2$}(d2);
%(d2) edge node[below=0.05cm, midway] (e8) {} (pn);

\draw[every loop,bend left,shorten <=2pt] 
(p1) edge node[above, midway] (e3) {$\A_+$}(dt1)
(dt2) edge node[above, midway] (e5) {$\A_+$}(pn)
(dt1) edge node[below=0.05cm, midway] (e4){$\A_-$} (p1)
(pn) edge node[below=0.05cm, midway] (e6) {$\A_-$} (dt2);
\end{tikzpicture}
\end{scaletikzpicturetowidth}
\caption{The match-mismatch network zero-sum game.}\label{fig:network_zero_sum}
\end{figure}
Each agent $p_1,p_2,\dots,p_n$ has two actions $\{H,T\}$ and receives $+1$ if they match the action of the next agent (otherwise they receive $-1$) and $+1$ if they mismatch the strategy of the previous agent (otherwise they receive $-1$). There are two dummy agents, $d_1,d_2$, who have the same payoffs but only one action: H for $d_1$ and $T$ for $d_2$. In particular, the payoff matrices for the games between non-dummy agents are given by 
\begin{equation}\label{eq:network}
\A_+=\begin{pmatrix} \phantom{-}1 & -1\\
 -1 & \phantom{-}1\end{pmatrix},\quad \A_-=-\A_+ \tag{ZSPG}
\end{equation}
whereas the payoff matrices of dummy agents $d_1$ and $d_2$ are given by $\A_1=\A_2= (1, -1)$. The column agents in the payoff matrices $\A_1$ and $\A_2$ are $p_1$ and $p_n$, and their payoffs in the encounters against $d_1$ and $d_2$ are given by $-\A_1^\top$ and $-\A_2^\top$, respectively. This game has multiple (infinite many) Nash equilibria of the following form: the odd-numbered agents play pure strategy $T$ (which, for agent 1 ensures $+1$ against $d_1$) whereas the even agents are indifferent between $H,T$ (since they are certain to have the opposing result in the two games against the previous and the next agent, both being odd and playing $T$).

\paragraph{Results:}
We provide visualizations from two instances of the \eqref{eq:network} game. In the first, with $n=3$ non-dummy agents (cf. panels 1-3 in \Cref{fig:line_network}), we visualize the projections of the QRE surfaces on the $H$ coordinate at QRE for agents $1$ to $3$ for fixed $T_3=3$ and all possible combinations of $T_1,T_2$ (x-y planes). The QRE manifolds approach continuously the boundary (case with $T_i=0$ for $i=1,2$) which leads to a unique equilibrium selection (also when $T_1=T_2=0$). The fourth panel shows a snapshot of the Lyapunov function (KL-divergence between a choice distribution $\x$ and the unique QRE $\q$) for a fixed exploration profile $T_k>0$ for all $k=1,\dots, n$ in an instance with $n=7$ non-dummy agents (the plot is similarly for all $n$). For this plot, we use the dimension reduction technique (to visualize high-dimensional surfaces along two randomly chosen directions) of \cite{Li18} (see also Appendix C). As expected, the KL-divergence is convex and decreasing for all $\x$ with a unique minimizer.

\begin{figure}[b]
    \centering
    \includegraphics[width=0.95\linewidth,clip=true, trim=5.2cm 0 4.3cm 0]{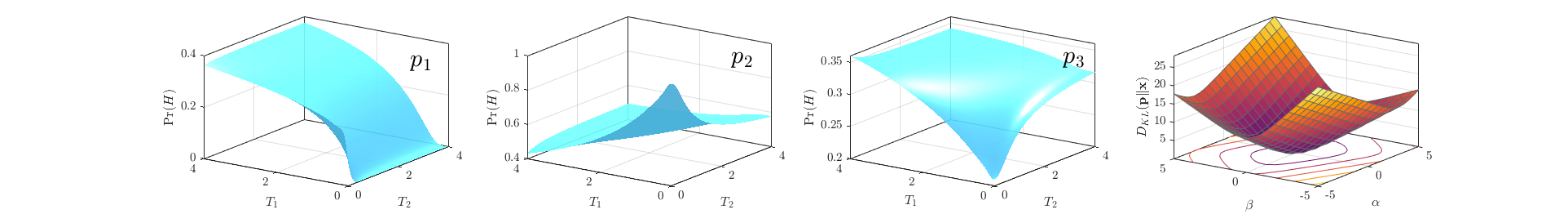}
    \caption{Projections of the QRE surfaces on the $H$ coordinate of players $p_1,p_2,p_3$ (panels 1-3) and a snapshot of the Lyapunov function, KL-divergence, (panel 4) in two instances of the \eqref{eq:network} game with $3$ and $7$ non-dummy agents, respectively, as discussed in \Cref{exp:network}.}
    \label{fig:line_network}
\end{figure}

\Cref{fig:summary} shows summary statistics for the $n=7$ instance of the \eqref{eq:network} game for three different exploration profiles. The panels in the leftmost column show the equilibria (top) and utilities (bottom) of the $7$ agents in $100$ runs when $T_k=0$ for all $k=1,2,\dots,n$ (no exploration). In this case, the dynamics converge in all runs to the pure action for the odd agents and to some arbitrary (and different every time) mixed action for the even agents. This behavior of the learning dynamics is in line with previous results in adversarial learning when equilibria lie on one face (relative boundary) of the high-dimensional simplex (cf. \cite{Mer18}).\par
The panels in the middle column correspond to a case with exploration by agents $1,2,3$ and no exploration by agents $4,\dots,7$. Exploration by the first group of agents ensures convergence of their individual dynamics to the corresponding component of the QRE (in line with the theoretical prediction of \Cref{thm:main}) but even-numbered agents that lie further away from that group fail to converge to a unique outcome (boxplot for agents 4 and 6). The outcome in this case, shows that the statement of \Cref{thm:main} is tight, and exploration by only a subset of agents may not be enough to ensure a unique outcome in general settings. Finally, the panels in the rightmost column show convergence to a unique QRE when all agents have positive exploration rates.

\begin{figure}[!tb]
    \centering
    \includegraphics[width=0.327\linewidth,valign=b]{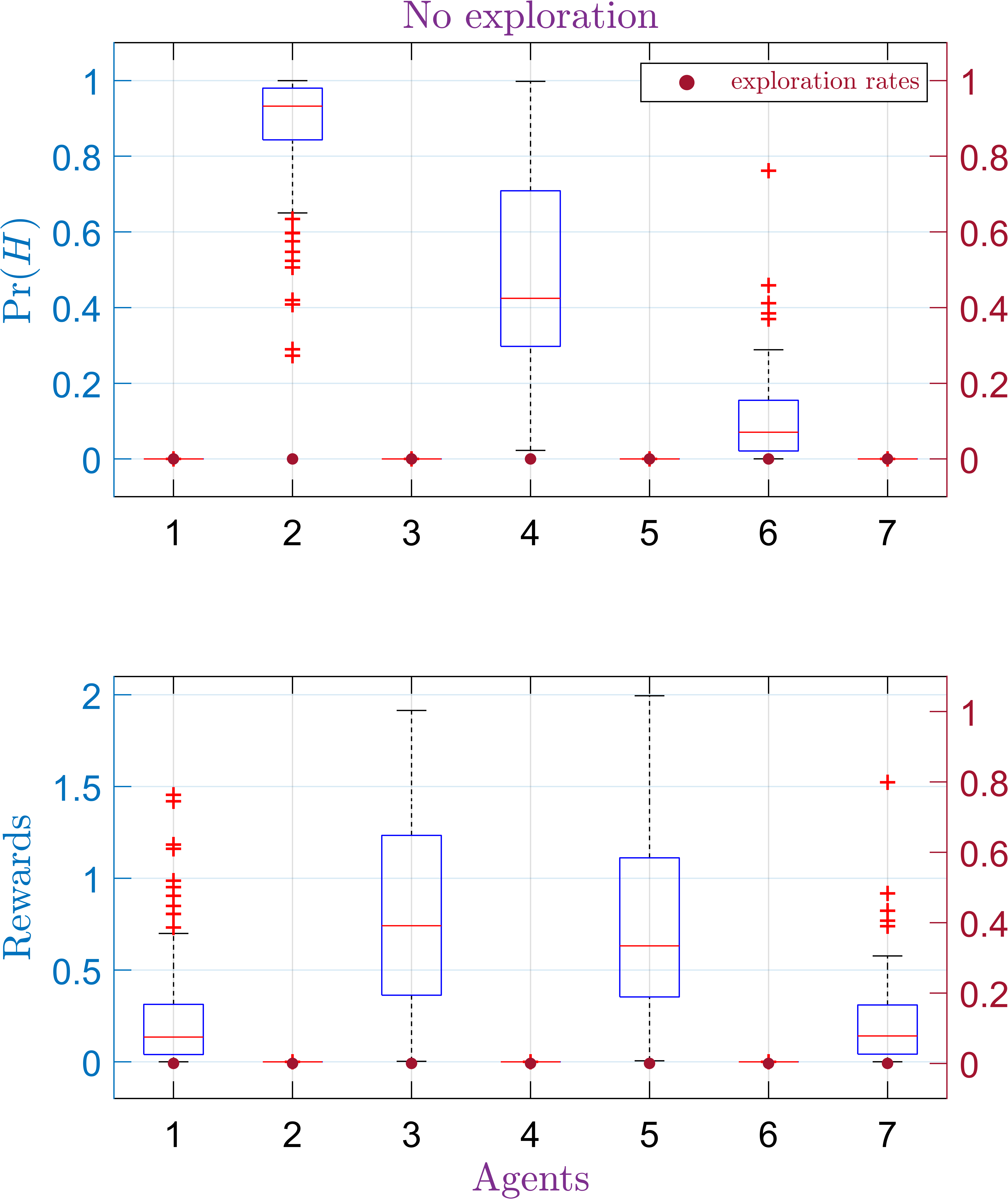}\hspace{0.2cm}
    \raisebox{0em}{\includegraphics[width=0.309\linewidth,valign=b]{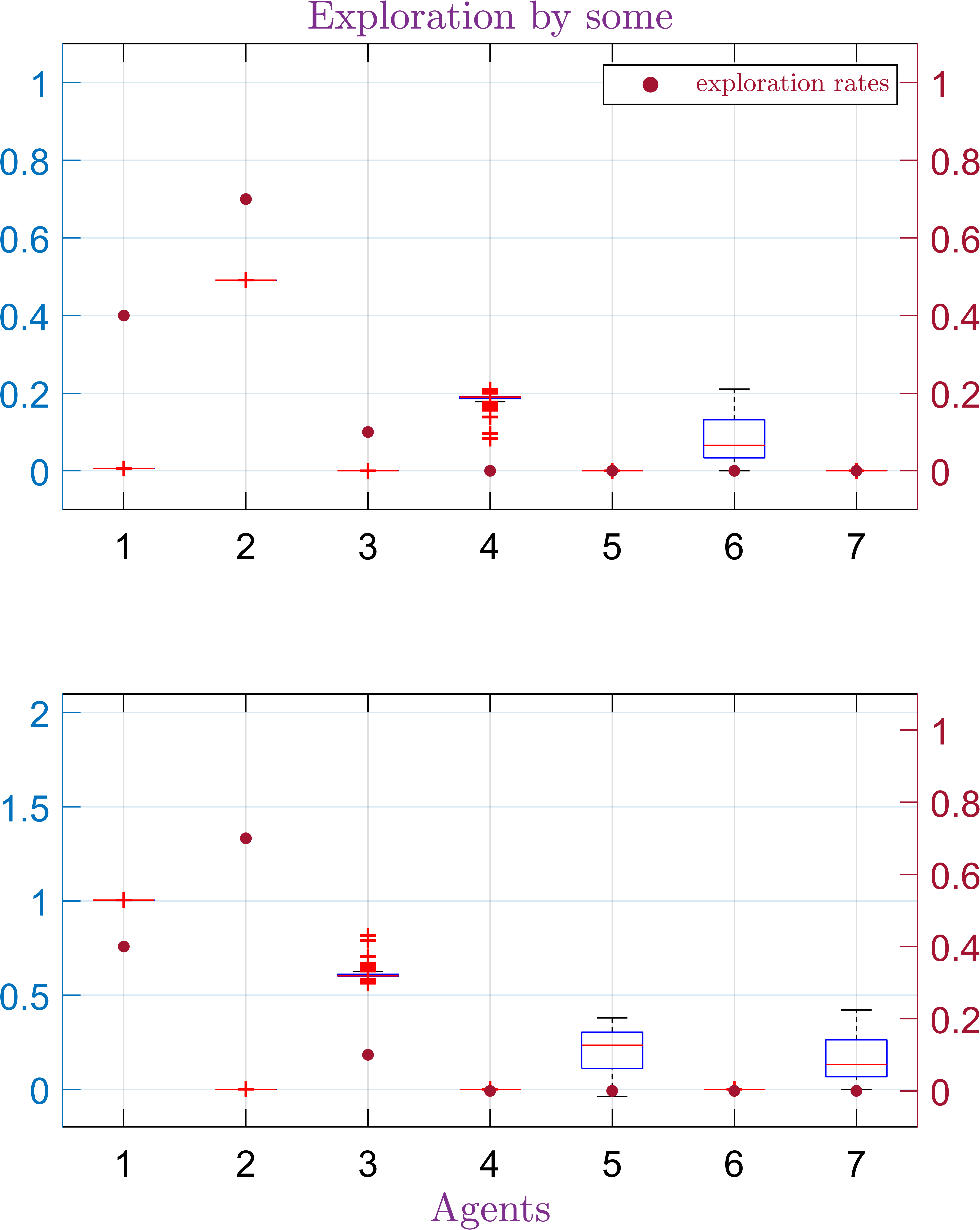}}\hspace{0.2cm}
    \includegraphics[width=0.32\linewidth,valign=b]{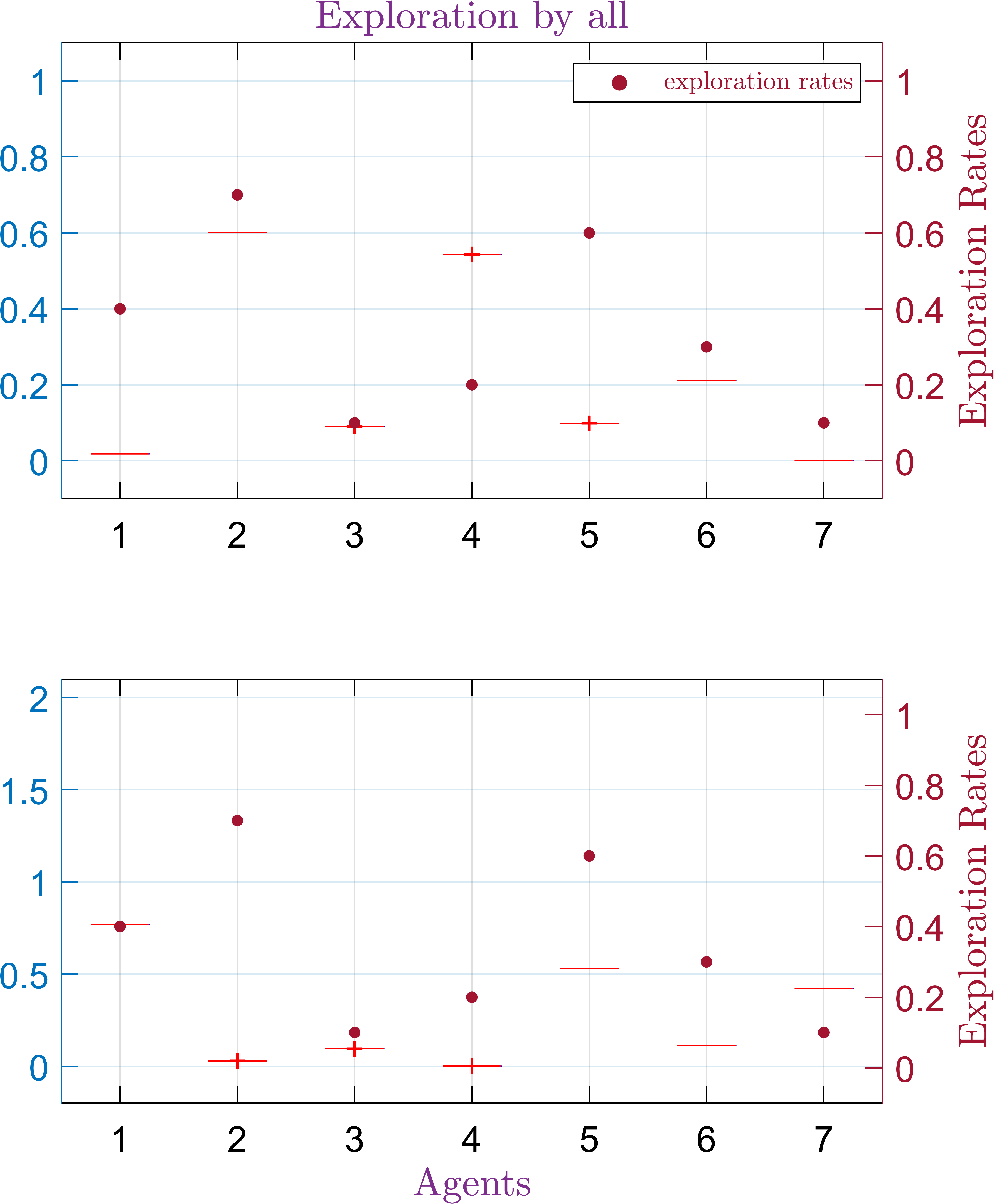}
    \caption{Different rates of exploration in an instance of the match-mismatch ZSPG with $7$ agents. The boxplots show summery statistics from $100$ runs. No exploration leads to multiple equilibria (first column), exploration by all agents leads to a unique QRE (last column), whereas equilibrium by some agents is not enough to ensure convergence to a unique outcome which shows the tightness of \Cref{thm:main} (middle column).\vspace*{-0.2cm}}
    \label{fig:summary}
\end{figure}

\section{Conclusions}
\label{sec:conclusions}
In this paper, we studied a commonly used smooth variant of the Q-learning dynamics in multi-agent competitive systems. Given that each agent's strategy includes at least some amount of exploration, we showed that the dynamics always converge to a unique Quantal Response Equilibrium (QRE). In particular, the convergence of Q-learning in this (competitive) setting is a remarkably robust phenomenon, occurring regardless of the number of agents or degree of competition, and without any need for parameter fine-tuning.\par
Our theoretical results have important implications from an algorithmic perspective. In contrast to two-agent zero-sum games %(that are typically used to model and theoretically study AI competition)
in which agents have the same reward in all equilibria, multi-agent systems may exhibit a multiplicity of equilibria in which each agent may have different rewards. This inherent complexity within multi-agent systems leads to the non-trivial, and currently open problem, of equilibrium selection, which has previously been studied only within \emph{cooperative} settings. The bounds we prove on the speed of convergence of the Q-learning dynamics to the QRE provide the necessary theoretical guarantees for a simple algorithm to select (approximate) Nash equilibria, in turn implying that the problem of equilibrium selection is tractable within competitive multi-agent settings as well.  Our results complement previous advances on the convergence of Q-learning and equilibrium selection in cooperative settings and provide a promising starting point for exploring general multi-agent settings which combine elements from both cooperative and competitive systems.

\section*{Acknowledgements}
For this research, Stefanos Leonardos and Georgios Piliouras were supported in part by NRF2019-NRF-ANR095 ALIAS grant, grant PIE-SGP-AI-2018-01, NRF 2018 Fellowship NRF-NRFF2018-07,  AME Programmatic Fund (Grant No. A20H6b0151) from the Agency for Science, Technology and Research (A*STAR) and the National Research Foundation, Singapore under its AI Singapore Program (AISG Award No: AISG2-RP-2020-016). Kelly Spendlove was supported by EPSRC grant EP/R018472/1.

\bibliographystyle{plain}
\bibliography{bib_qlearning,refer}

\appendix
\section{Omitted Proofs and Materials: \texorpdfstring{\Cref{sec:qlearning}}{ppp}}\label{app:derivation}

We first restate and prove \Cref{lem:qcontinuous}.

{\bf\Cref{lem:qcontinuous}.}
{\em 
Consider the Q-learning updates
\begin{equation}\label{eq:appendix:qupdates}
Q_i(n+1)=Q_i(n)+\alpha[r_i(n)-Q_i(n)]
\end{equation}
where $n\ge0$ are discrete time steps and assume that $Q_i(0)=0$ for all $i\in A$. Then, in continuous time, the updates are given by
\[Q_i(t)=\alpha \int_{0}^te^{-\alpha s}r_i(t-s) ds =\alpha e^{-\alpha t}\int_{0}^te^{\alpha s}r_i(s) \mathop{ds}, \quad \text{for any } t>0.\]}
\begin{proof}%[Proof of \Cref{lem:qcontinuous}]
We will provide two proofs of the above statement. For the first, we solve the recursion in equation \eqref{eq:appendix:qupdates} and then use the approximation of the exponential function by the geometric sum. Specifically, we have that
\begin{align*}
Q_i(n+1)&=\alpha r_i(n)+(1-\alpha)Q_i(n)\\
&=\alpha r_i(n)+(1-\alpha)[\alpha r_i(n-1)+(1-\alpha)Q_i(n-1)]\\
&=\alpha r_i(n)+(1-\alpha)\alpha r_i(n-1)+(1-\alpha)^2Q_i(n-1)\\
&=\alpha r_i(n)+(1-\alpha)\alpha r_i(n-1)+(1-\alpha)^2[\alpha r_i(n-2)+(1-\alpha)Q_i(n-2)]\\
&= \dots\\
&=(1-\alpha)^{n+1}Q_i(0)+\sum_{k=0}^n \alpha (1-\alpha)^k r_i(n-k)\\
&=\sum_{k=0}^n \alpha (1-\alpha)^k r_i(n-k)
\end{align*}
where in the last equation we did a change of variables in the summation since $Q_i(0)=0$ by assumption. By taking continuous time steps (i.e., if instead of $k\to k+1$, we consider $k\to \Delta k$ with $\Delta k\to 0$), then the above becomes 
\[Q(t)=\alpha\int_{0}^t e^{-as}r_i(t-s)ds\]
as claimed. By a change of variables, we also obtain the second expression in the Lemma's statement. The second way to prove this statement is by considering directly a continuous time version of \eqref{eq:appendix:qupdates}. In this case, we have that 
\[\dot Q_i(s)=\alpha [r_i(s)-Q_i(s)],\]
where $\dot Q_i(s)$ denotes the time derivative of $Q_i(s)$, i.e., $d Q_i(s)/ds$. This is a first order non homogeneous (in the constant term) linear differential equation of the form
\[\dot Q_i(s)+\alpha Q_i(s)=\alpha r_i(s).\]
By multiplying both sides with the integrating factor $e^{\alpha s}$, we obtain that
\[e^{\alpha s}[\dot Q_i(s)+\alpha Q_i(s)]=\alpha e^{\alpha s}r_i(s) \implies \frac{d}{ds}(e^{\alpha s}Q_i(s))=\alpha e^{\alpha s}r_i(s).\]
We can now integrate both sides of the last expression from $0$ to $t$ to obtain that
\[e^{\alpha t}Q_i(t)=\alpha \int_{0}^te^{\alpha s}r_i(s)ds \implies  Q_i(t)=\alpha e^{-\alpha t}\int_{0}^te^{\alpha s}r_i(s)ds\]
as claimed, where the integration constant disappears due to the boundary condition $Q_i(0)=0$.
\end{proof}

\subsection{Derivation of Q-learning dynamics}
We next provide two formal derivations of the Q-learning dynamics of equation \eqref{eq:kdynamics}. Similar calculations can be found in \cite{Tuy03,Sat03,Sat05,Kia12} and \cite{Gal13,Pan17}.

\begin{proposition}[Derivation of Q-learning Dynamics]\label{prop:main}
Assume that players select their actions according to a Boltzmann (or softmax) distribution with parameter $T$, i.e.,
\begin{equation}\label{eq:boltzmann}
x_i(t)=\frac{\exp{\<Q_{i}(t)/T\>}}{\sum_{j\in A}\exp{\<Q_{j}(t)/T\>}}\,\quad \text{for any } t>0,
\end{equation}
with $x_i(0)$ is initialized arbitrarily for all $i\in A$, where $Q_i(t)$ is the Q-value of action $i$ at time $t$. Then, the evolution of the action probabilities is governed by the Q-learning dynamics
\begin{equation}\label{eq:qlear}
\dot x_i(t)=x_i(t)\lt r_i(t)-\sum_{j\in A}r_j(t)x_j(t)-T\<\ln {x_i(t)}-\sum_{j\in A}x_j(t)\ln{x_j(t)}\> \rt.
\end{equation}
\end{proposition}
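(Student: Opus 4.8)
The plan is to derive \eqref{eq:qlear} by differentiating the Boltzmann selection rule \eqref{eq:boltzmann} along the continuous-time Q-value flow and then eliminating the Q-values in favour of the strategies. First I would take logarithms in \eqref{eq:boltzmann}, writing $\ln x_i(t) = Q_i(t)/T - \ln Z(t)$ with $Z(t) := \sum_{j\in A}\exp(Q_j(t)/T)$, and differentiate in $t$. The chain rule gives $\dot Z/Z = \frac{1}{T}\sum_j x_j\dot Q_j$, so that $\dot x_i/x_i = \frac{1}{T}\big(\dot Q_i - \sum_j x_j\dot Q_j\big)$. This identity is exact for any differentiable path $Q(t)$ and is just the replicator-type structure forced by the softmax; note also that Boltzmann probabilities are strictly positive, so no boundary issues arise for any initialization and the computation is valid for all $t>0$.

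Next I would substitute the continuous-time Q-update $\dot Q_i = \alpha(r_i - Q_i)$ (the continuous limit of \eqref{eq:update_q}, also recovered in \Cref{lem:qcontinuous}), which turns the bracket into $\alpha\big(r_i - \sum_j x_j r_j\big) - \alpha\big(Q_i - \sum_j x_j Q_j\big)$. The key move is then to re-express the residual $Q_i - \sum_j x_j Q_j$ back in terms of $x$: inverting \eqref{eq:boltzmann} gives $Q_i = T\ln x_i + T\ln Z$, and since $\sum_j x_j = 1$ the common term $T\ln Z$ cancels in the weighted difference, leaving $Q_i - \sum_j x_j Q_j = T\big(\ln x_i - \sum_j x_j\ln x_j\big)$. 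Plugging this back in produces $\dot x_i = \frac{\alpha x_i}{T}\big(r_i - \sum_j x_j r_j - T(\ln x_i - \sum_j x_j\ln x_j)\big)$; taking $\alpha = 1$ without loss of generality (or absorbing the positive constant $\alpha/T$ into a linear rescaling of time, which does not change the orbits) yields exactly \eqref{eq:qlear}.

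For the second formal derivation promised in the text I would instead start from the discrete recursion $x_i(n+1) = x_i(n)\exp\!\big((Q_i(n{+}1){-}Q_i(n))/T\big)\big/\sum_j x_j(n)\exp\!\big((Q_j(n{+}1){-}Q_j(n))/T\big)$, insert $Q_i(n{+}1)-Q_i(n) = \alpha(r_i(n)-Q_i(n))$, interpolate with a time step $\Delta t$, Taylor-expand each exponential to first order, compute $x_i(n{+}1)-x_i(n)$, divide by $\Delta t$ and let $\Delta t\to 0$; the same Boltzmann inversion converts the Q-difference into the entropy term. I expect the only real difficulties to be bookkeeping rather than conceptual: carrying the normalisation factor $\ln Z$ through so that it cancels, and being explicit that the learning-rate/temperature prefactor is harmless up to time reparametrisation. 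The one genuinely structural point — and the heart of the argument — is the self-referential substitution $Q_i - \sum_j x_j Q_j = T(\ln x_i - \sum_j x_j\ln x_j)$: it is what makes the Q-values disappear from the final dynamics and what ties Boltzmann Q-learning to the entropy-regularised replicator equation.
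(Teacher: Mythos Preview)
Your proposal is correct and follows essentially the same approach as the paper's proof: both compute $\dot x_i/x_i=\tfrac1T(\dot Q_i-\sum_j x_j\dot Q_j)$ from the softmax, substitute $\dot Q_i=\alpha(r_i-Q_i)$, and then eliminate the Q-values via the inversion $Q_i-\sum_j x_jQ_j=T(\ln x_i-\sum_j x_j\ln x_j)$ before rescaling time by $\alpha/T$. The only minor deviation is in the alternative derivation, where the paper determines the normalising constant by imposing $\sum_i\dot x_i=0$ rather than by Taylor-expanding and re-invoking the Boltzmann inversion as you suggest; both routes work and lead to the same result.
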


\begin{proof}
The time derivative of $x_i(t)$ is equal to
\begin{align*}
\dot x_i(t)&%=\frac{\dot Q_i\<t\>\exp{\<Q_i(t)/T\>}\<\sum_{j\in A}\exp{\<Q_{j}(t)/T\>}\>-\exp{\<Q_i(t)/T\>}\<\sum_{j\in A}\dot Q_j(t)\exp{\<Q_{j}(t)/T\>}\>}{T\<\sum_{j\in A}\exp{\<Q_{j}(t)/T\>}\>^2}\\&
=\frac{\exp{\<Q_i(t)/T\>}}{T\sum_{j\in A}\exp{\<Q_{j}(t)/T\>}}\left[\dot Q_i\<t\>-\frac{\sum_{j\in A}\dot Q_j(t)\exp{\<Q_{j}(t)/T\>}}{\sum_{j\in A}\exp{\<Q_{j}(t)/T\>}}\right]\\&
=\frac1Tx_i(t)\left[\dot Q_i\<t\>-\sum_{j\in A}\<\frac{\exp{\<Q_{j}(t)/T\>}}{\sum_{j\in A}\exp{\<Q_{j}(t)/T\>}}\>\dot Q_j(t) \right]\\&
=\frac1Tx_i(t)\lt \dot Q_i\<t\>-\sum_{j\in A}x_j(t)\dot Q_j(t)\rt.
\end{align*} 
Hence, using that $\dot Q_i(t)=\alpha [r_i(t)-Q_i(t)]$, the above yields
\begin{align}\label{eq:star}
\dot x_i(t)&=\frac{\alpha}Tx_i(t)\lt r_i(t)-Q_i(t)-\sum_{j\in A}x_j(t)[r_j(t)-Q_j(t)]\rt\nonumber\\[0.2cm]
&=\frac{\alpha}Tx_i(t)\lt r_i(t)-\sum_{j\in A}x_j(t)r_j(t)-\<Q_i(t)-\sum_{j\in A}x_j(t)Q_j(t)\>\rt \tag{$\ast$}.
\end{align}
Finally, taking $\ln{}$ in equation \eqref{eq:boltzmann}, and solving for $Q_i(t)$, we have that  
\[Q_i(t)=T\lt \ln{x_i(t)}+\ln{\<\sum_{j\in A}\exp{(Q_j(t)/T)}\>}\rt\]
In the above expression, we may set $C:=\ln{\<\sum_{j\in A}\exp{(Q_j(t)/T)}\>}$ since this term is the same for all $i\in A$. Thus, we have that 
\begin{align*}
Q_i(t)-\sum_{j\in A}x_j(t)Q_j(t)&=T\<\ln{x_i(t)}+C\>-\sum_{j\in A}x_j(t)T\<\ln{x_i(t)}+C\>\\
&=T\< \ln{x_i(t)}-\sum_{j\in A}x_j(t)\ln{x_j(t)}\> +T\< C-\sum_{j\in A}x_j(t)C\>\\
&=T\< \ln{x_i(t)}-\sum_{j\in A}x_j(t)\ln{x_j(t)}\>
\end{align*}
since $\sum_{j\in A}x_j(t)=1$ and hence, the terms $C$ and $\sum_{j\in A}x_j(t)C$ cancel out. Substituting the last expression in equation \eqref{eq:star}, we obtain the desired solution, namely
\[
\dot x_i(t)=\frac{\alpha}Tx_i(t)\lt r_i(t)-\sum_{j\in A}x_j(t)r_j(t)-T\< \ln{x_i(t)}-\sum_{j\in A}x_j(t)\ln{x_j(t)}\>\rt.
\]
Rescaling time by $t\to \alpha t/T$, we obtain the solution.
\end{proof}

\paragraph{An Alternative Derivation} (see also \cite{Gal13,Pan17}). Another way to obtain the Q-learning dynamics in equation \eqref{eq:qlear} is by a direct substitution of \eqref{eq:appendix:qupdates} in \eqref{eq:boltzmann}. In this case, we have that
\begin{align*}
x_i(t+1)&=\frac{\exp{\<Q_{i}(t+1)/T\>}}{\sum_{j\in A}\exp{\<Q_{j}(t+1)/T\>}}\\[0.2cm]
&=\frac{\exp{\<\<\alpha r_i(t+1)+(1-\alpha)Q_i(t)\>/T\>}}{\sum_{j\in A}\exp{\<\<\alpha r_j(t+1)+(1-\alpha)Q_j(t)\>/T\>}}\\[0.2cm]
&=\frac{\exp{\<(1-\alpha)Q_i(t)/T\>}\cdot\exp{\<\alpha r_i(t+1)/T\>}}{\sum_{j\in A}\exp{\<(1-\alpha)Q_j(t)/T\>}\cdot\exp{\<\alpha r_j(t+1)/T\>}}\\[0.2cm]
&=\frac{\<\exp{\<Q_i(t)/T\>}\>^{(1-\alpha)}\cdot\exp{\<\alpha r_i(t+1)/T\>}}{\sum_{j\in A} \<\exp{\<Q_j(t)/T\>}\>^{(1-\alpha)}\cdot\exp{\<\alpha r_j(t+1)/T\>}}\\&=\frac{x_i(t)^{(1-\alpha)}\cdot\exp{\<\alpha r_i(t+1)/T\>}}{\sum_{j\in A}x_j(t)^{(1-\alpha)}\cdot\exp{\<\alpha r_j(t+1)/T\>}}
\end{align*}
where the last equality is obtained by dividing both numerator and denominator with the normalizing constant $\sum_{k\in A}\<\exp{\<Q_k(t)/T\>}\>^{(1-\alpha)}$. By taking $\ln{}$ of both sides in the previous equation, we then have that
\[\ln{x_i(t+1)}=\ln{x_i(t)}-\alpha\ln{x_i(t)}+\frac{\alpha}Tr_i(t+1)-\ln{\<\sum_{j\in A}x_j(t)^{(1-\alpha)}\cdot\exp{\<\alpha r_j(t+1)/T\>}\>}\]
or equivalently
\begin{align*}
\ln{x_i(t+1)}-\ln{x_i(t)}&=\frac{\alpha}Tr_i(t+1)-\alpha\ln{x_i(t)}-\ln{C}
\end{align*}
where $C:=\sum_{j\in A}x_j(t)^{(1-\alpha)}\cdot\exp{\<\alpha r_j(t+1)/T\>}$ is the denominator of the previous expression which is the same for all $i\in A$. Thus, in continuous time, the above equation becomes 
\[\frac{d}{dt}\ln{x_i(t)}=\frac{\alpha}{T}r_i(t)-\alpha \ln{x_i(t)} -\ln{C}\]
which yields
\begin{align}\label{eq:astast}
\dot x_i(t)=\frac{\alpha}{T}x_i(t)\lt r_i(t)-T\ln{x_i(t)} -\frac{T}{\alpha}\ln{C}\rt.\tag{$\ast\ast$}
\end{align}
To determine $\ln{C}$, note that $\sum_{i\in A}\dot x_i(t)=0$ since $\sum_{i\in A}x_i(t)=1$ for all $t$ (i.e., the sum of the $x_i$'s
remains constant, and equal to 1, at all times $t\ge0$). Thus, summing over all $i\in A$, we obtain 
\begin{align*}
0&=\sum_{i\in A}x_i(t)r_i(t)-T\sum_{i\in A}x_i(t)\ln{x_i(t)} -\frac{T}{\alpha}\ln{C}\sum_{i\in A}x_i(t)\\
&=\sum_{i\in A}x_i(t)r_i(t)-T\sum_{i\in A}x_i(t)\ln{x_i(t)} -\frac{T}{\alpha}\ln{C}
\end{align*}
or equivalently
\[\frac{T}{\alpha}\ln{C}=\sum_{i\in A}x_i(t)r_i(t)-T\sum_{i\in A}x_i(t)\ln{x_i(t)}\]
Substituting this expression back in equation \eqref{eq:astast}, we obtain 
\begin{align*}
\dot x_i(t)&=\frac{\alpha}{T}x_i(t)\lt r_i(t)-T\ln{x_i(t)} -\sum_{j\in A}x_j(t)r_j(t)+T\sum_{j\in A}x_j(t)\ln{x_j(t)}\rt\\
&=\frac{\alpha}{T}x_i(t)\lt r_i(t)-\sum_{j\in A}x_j(t)r_j(t)-T\<\ln{x_i(t)}-\sum_{i\in A}x_j(t)\ln{x_j(t)}\>\rt,
\end{align*}
which after rescaling time to $t\to\alpha t/T$ is precisely the expression of the Q-learning dynamics in equation \eqref{eq:qlear}.

\paragraph{Relation to Experience Weighted Attraction (EWA) Learning} (see also \cite{Pan17}). In Experience Weighted Learning (EWA), the \emph{attractions} (the equivalent of Q-values), $Q_i(t)$ of each action $i\in A$ are updated according to the following scheme
\begin{align*}
Q_i(t+1)&=\frac{(1-\alpha) N(t)Q_i(t)+[\delta+(1-\delta)I(i,s_i(t))]r_i(t)}{N(t+1)},\\
N(t+1)&=\rho N(t)+1,
\end{align*}
where $I(i,s(t))=1$ if $i=s(t)$ and $0$ otherwise. Here $s(t)$ denotes the action taken by the agent at time $t$. The variables $N(t)$ and $Q_i(t)$ are initialized arbitrarily at $t=0$, but typically, they are set to be $0$ at $t=0$. For $\rho=0$, i.e., in the case in which previous experience does not reduce the impact of current rewards, the above system becomes
\begin{align*}
Q_i(t+1)&=(1-\alpha)Q_i(t)+\begin{cases}\phantom{\delta}r_i(t), & \text{if } s(t)=i, \\ \delta r_i(t), & \text{otherwise} \end{cases} 
\end{align*}
Thus, for $\delta=1$, the EWA update rule becomes equal to the Q-learning updates, up to a constant $\alpha$ in the rewards.

\subsection{Q-learning and Quantal Response Equilibria}

We next restate and prove \Cref{thm:preliminary}.

{\bf \Cref{thm:preliminary}.}
{\em 
Let $\Gamma$ be an arbitrary game, with positive exploration rates $T_k$ and consider the associated Q-learning dynamics
\[\dot\xk=\xk\lt r_{ki}(\x_{-k})-\x_k^\top\rk -T_k\< \ln{(\xk)}-\x_k^\top\ln{(\x_k)} \> \rt,\;\; i\in S_k, k\in V.\] 
The interior fixed points, $\p=(p_k)_{k\in V}$ of the Q-learning dynamics are the solutions of the system  
\begin{equation}\label{eq:appendix:qre}
p_{ki}=\frac{\exp{\<r_{ki}(\p_{-k})/T_k\>}}{\sum_{j\in S_k}\exp{\<r_{kj}(\p_{-k})/T_k\>}}, \quad \text{for all } i\in S_k.
\end{equation}
Such fixed points always exists and coincide with the Quantal Response Equilibria (QRE) of $\Gamma$. Given any such fixed point $\p$, we have, for all $\x_k\in \Delta_k$ and for all $k\in V$, that
\begin{equation}\label{eq:appendix:eqprop}
(\x_k-\p_k)^\top \lt r_{k}(\p_{-k})-T_k \ln{(\p_k)}\rt=0.
\end{equation}
}

\begin{proof}
Solving equation \eqref{eq:kqre} for $\ln{q_{ki}}$ and applying the exponential function on both sides of the resulting equation yields that
\begin{align*}
p_{ki}&=\exp{\<\frac{r_{ki}(\p_{-k})-\p_k^\top \rp +T_k\p_k^\top\ln{(\p_k)}}{T_k}\>}
\\&=\exp{\<\frac{r_{ki}(\p_{-k})}{T_k}\>}\cdot\exp{\<\frac{-\p_k^\top \rp +T_k\p_k^\top\ln{(\p_k)}}{T_k}\>}.
\end{align*}
The second term in the last equation, i.e., $\exp{\<\frac{-\p_k^\top \rp +T_k\p_k^\top\ln{(\p_k)}}{T_k}\>}$, is the same (constant) for all $i \in S_k$. Thus, denoting this term by $Z$, we have that 
\[p_{ki}=\exp{\<\frac{r_{ki}(\p_{-k})}{T_k}\>}\cdot Z, \quad \text{for all} i\in S_k.\]
Since $\p_k$ lies in $\Delta_k$, it must be the case that $\sum_{j\in S_k}p_{ki}=1$, which implies that 
\[\sum_{i\in S_k}p_{ki}=\sum_{i\in S_k}\exp{\<\frac{r_{ki}(\p_{-k})}{T_k}\>}\cdot Z=1\implies Z=\<\sum_{i\in S_k}\exp{\<\frac{r_{ki}(\p_{-k})}{T_k}\>}\>^{-1}.\]
Substituting back in the expression for $p_{ki}$, we obtain that 
\[p_{ki}=\frac{\exp{\<r_{ki}(\p_{-k})/T_k\>}}{\sum_{j\in S_k}\exp{\<r_{kj}(\p_{-k})T_k\>}},\]
as claimed in equation \eqref{eq:appendix:qre}. Existence follows from the application of Brouwer's fixed point theorem on the continuous map defined by the previous in $\Delta_k$, see \cite{Mck95}. To obtain equation \eqref{eq:appendix:eqprop}, we observe that for each $x_k\in \Delta_k$ and each $k\in V$, equation \eqref{eq:kqre} implies that
\[\x_k^\top r_{k}(\p_{-k})=\p_k^\top \rp -T_k (\p_k-\x_k)^\top\ln{(\p_k)}, \;\; \text{for all } k\in V.\]
or equivalently that
\[(\x_k-\p_k)^\top \lt r_{k}(\p_{-k})-T_k \ln{(\p_k)}\rt=0,\]
for all $k\in V$ as claimed in equation \eqref{eq:appendix:eqprop}.
\end{proof}

\section{Omitted Proofs and Materials: \texorpdfstring{\Cref{sec:convergence}}{ooo}}

In this section, we develop the necessary technical framework for the proof of \Cref{thm:main}. \Cref{thm:main} is based upon two critical lemmas and properties of both KL divergence and rescaled zero-sum polymatrix games. \par

First, recall that in KL-divergence is not symmetric, i.e., it need not hold that $\KL(\p\|\x(t))=\KL(\x(t)\|\p)$.  However, KL-divergence does obey the following property.

\begin{property}\label{prop:symmetric}
Let $k \in V$ and let $\p_k,\x_k$ be interior points of $\Delta_k$. Then, it holds that 
\[\KL(\p_k\|\x_k)+\KL(\x_k\|\p_k)=(\x_k-\p_k)^\top\lt \ln{(\x_k)}-\ln{(\p_k)}\rt.\]
\end{property}
\begin{proof}
It is immediate to check that 
\begin{align*}
(\x_k-\p_k)^\top\lt \ln{(\x_k)}-\ln{(\p_k)}\rt&=\x_k^\top\ln{\<\frac{\x_k}{\p_k}\>}-\p_k^\top\ln{\<\frac{\x_k}{\p_k}\>}\\
&=\x_k^\top\ln{\<\frac{\x_k}{\p_k}\>}+\p_k^\top\ln{\<\frac{\p_k}{\x_k}\>}\\[0.15cm]
&=\KL(\x_k\|\p_k)+\KL(\p_k\|\x_k),
\end{align*}
as claimed.
\end{proof}

{\bf\Cref{lem:tderivative}.}
{\em 
%\begin{lemma}[Time derivative of the $\KL$-divergence]
Let $k\in V$. The time-derivative of the $\KL$-divergence between the $k$-th component, $\p_k \in \Delta_k$, of a QRE $\p\in \Delta$ of $\Gamma$, and the $k$-th component, $\x_k(t)\in \Delta_k$ of a system trajectory with $\x(0)$ an interior point, is given by
\begin{equation}\label{eq:appendix:timederivative}
\frac{d}{dt}\KL(\p_k\|\x_k(t))=(\x_k-\p_k)^\top \lt \rk-r_{k}(\p_{-k})\rt-T_k\lt \KL(\p_k\|\x_k)+\KL(\x_k\|\p_k) \rt .
\end{equation}
%\end{lemma}
}
\begin{proof}
The time derivative of the term $\KL(\p_k\|\x_k)$ for $k\in V$ can be calculated as follows (note that after the first line, we omit the dependence of $\x$ on $t$ to simplify notation)
\begin{align*}
\frac{d}{dt}\KL(\p_k\|\x_k(t))&= -\sum_{i\in S_k} p_{ki}\frac{d}{dt}(\ln{(\xk(t))})=-\sum_{i\in S_k} p_{ki}\frac{\dot\xk(t)}{\xk(t)}\\
&= -\sum_{i\in S_k} p_{ki}\lt r_{ki}(\x_{-k})-\x_k^\top \rk+T_k\< -\ln{(\xk)}+\x_k^\top\ln{(\x_k)}\> \rt\\
&=-\lt \p_k^\top \rk-\x_k^\top \rk+T_k\<-\p_k^\top \ln{(\x_k)}+\x_k^\top \ln{(\x_k)}\>\rt \\[0.2cm]
&=\phantom{-\,} (\x_k-\p_k)^\top \rk -T_k(\x_k-\p_k)^\top\ln{(\x_k)}\\[0.2cm]
&=\phantom{-\,} (\x_k-\p_k)^\top \lt \rk -T_k\ln{(\x_k)}\rt.
\end{align*}
Using equation \eqref{eq:appendix:eqprop}, i.e., that 
\[(\x_k-\p_k)^\top \lt r_k(\p_{-k}) -T_k\ln{(\p_k)}\rt=0,\]
we can write the time derivative of $\KL$ as follows
\begin{align*}
\frac{d}{dt}\KL(\p_k\|\x_k(t))&=(\x_k-\p_k)^\top \lt \rk -T_k\ln{(\x_k)}\rt-(\x_k-\p_k)^\top \lt r_k(\p_{-k}) -T_k\ln{(\p_k)}\rt\\
&=(\x_k-\p_k)^\top \lt \rk-r_k(\p_{-k})\rt-T_k(\x_k-\p_k)^\top\lt\ln{\x_k}-\ln{\p_k}\rt\\
&=(\x_k-\p_k)^\top \lt \rk-r_k(\p_{-k})\rt-T_k\lt \KL(\p_k\|\x_k)+\KL(\x_k\|\p_k)\rt
\end{align*}
where the last equation holds due to \Cref{prop:symmetric}. This concludes the proof of the Lemma.
\end{proof}

{\bf \Cref{lem:summation}.}
{\em 
%\begin{lemma}
Let $\p=(\p_k)_{k\in V}$ be a QRE of $\Gamma$ and let $\x=(\x_k)_{k\in V}$. Then, it holds that
\begin{equation}\label{eq:appendix:summation}
\sum_{k\in V}w_k\lt \x_k^\top \rp+\p_k^\top\rk\rt=0.
\end{equation}
%\end{lemma}
}
Before proceeding with the proof of \Cref{lem:summation}, note that if there only two players, i.e., if $V=\{1,2\}$, then $\p=(\p_1,\p_2)$ and $\x=(\x_1,\x_2)$, and it is rather immediate to check the validity equation \eqref{eq:appendix:summation}, since
\begin{align*}
\sum_{k=1,2}w_k[ \x_k^\top \rp&+\p_k^\top\rk]=\\&=w_1\lt\x_1^\top r_1(\p_2)+\p_1^\top r_1(\x_2)\rt+w_2\lt\x_2^\top r_2(\p_1)+w_2\p_2^\top r_2(\x_1)\rt\\
& = \lt w_1\x_1^\top r_1(\p_2)+w_2\p_2^\top r_2(\x_1)\rt+\lt w_2\x_2^\top r_2(\p_1)+w_2\p_1^\top r_1(\x_2)\rt\\[0.2cm]
& = \underbrace{\sum_{k=1,2} w_k u_k(\x_1,\p_2)}_{=\,0 \text{ by equation \eqref{eq:zsprop}}} +\underbrace{\sum_{k=1,2}w_ku_k(\p_1,\x_2)}_{=\,0 \text{ by equation \eqref{eq:zsprop}}}=0.
\end{align*}
In particular, the last equation holds because the summations are over all payoffs in the strategy profiles $(\x_1,\p_2)$ and $(\p_1,\x_2)$. When we have more than two players, this argument does not hold, since the summation is over different strategy profiles. However, it still holds that the summation is equal to zero. To show this, we will need to apply the following property that has been established in \cite{Cai16} for zero-sum polymatrix games to the case of weighted zero-sum polymatrix games (the extension is rather straightforward as we show below). To state the property in the general case, we will use the following definition. 

\begin{definition}[$w_k$-Payoff equivalence]
Consider two arbitrary games $\Gamma=\<(V,E),(S_k,u_k)_{k\in V}\>$ and $\Gamma'=\<(V,E),(S_k,u_k')_{k\in V}\>$. We will say that $\Gamma$ is $w_k$-payoff equivalent to $\Gamma'$ if there exist positive constants $w_k, k\in V$ so that 
\[u_k(\x)=w_ku_k'(\x),\;\; \text{for all } x\in \Delta.\]
\end{definition}

\begin{property}[Payoff equivalent transformation \cite{Cai16}.]\label{prop:transformation}
Let $\Gamma=\<(V,E), \<S_k,w_k\>_{k\in V},\<\A_{kl}\>_{[k,l]\in E}\>$ be a rescaled zero-sum polymatrix game. Then, $\Gamma$ is $1/w_k$-payoff equivalent to a pairwise constant-sum (unweighted) polymatrix game $\hat \Gamma=\<(V,E), \<S_k\>_{k\in V},\<\hat\A_{kl}\>_{[k,l]\in E}\>$, i.e., a game in which every two-player game $[k,l]\in E$ is constant-sum and all these constants sum up to zero. Specifically, for all $k,l\in V$ with $[k,l]\in E$, there exist payoff matrices $\hat\A_{kl}=\<\hat a_{kl}(s_k,s_l)\>_{\{s_k\in S_k,s_l\in S_l\}}$ and constants $c_{kl} \in \mathbb R$, so that
\begin{align}\label{eq:appendix:trans1}
\hat a_{kl}(s_k,s_l)+\hat a_{lk}(s_l,s_k)&=c_{kl}, \;\; \text{for all } s_k\in S_k, s_l\in S_l,
\end{align}
with 
\begin{equation}\label{eq:appendix:trans3}
\sum_{[k,l]\in E} c_{kl}=0,
\end{equation}
and 
\begin{equation}\label{eq:appendix:trans2}
\hat r_{ki}(\x_{-k}):=\sum_{[k,l]\in E}\{\hat \A_{kl}\x_l\}_i=\sum_{[k,l]\in E}w_k\{\A_{kl}\x_l\}_i=w_kr_{ki}(\x_{-k}),
\end{equation}
for all $k\in V$ and all $i\in S_k$, i.e., for every pure (and hence, also for every mixed) strategy profile, the payoff of each player $k\in V$ in $\hat \Gamma$ is equal to $1/w_k$ their payoff in the original game $\Gamma$.
\end{property}

\begin{proof}
As mentioned above, all claims of \Cref{prop:transformation} have been established for (unweighted) zero-sum polymatrix games in \cite{Cai16}. Thus, it remains to show that the proof extends to the weighted case. To see this, consider a weighted zero-sum polymatrix game $\Gamma$ with weights $\w=(w_k)_{k\in V}$, payoff matrices $\<\A_{kl}\>_{[k,l]\in E}$ and utilities $u_k,k\in V$ as in equation \eqref{eq:utility} and define the transformed game $\Gamma^\B$ with payoff matrices $\<\B_{kl}\>_{[k,l]\in E}$ given by
\[\B_{kl}:=w_k\A_{kl}, \quad \text{for all } k \in V, [k,l]\in E,\]
and utilities $u_k^\B, k\in V$.
Then, it follows from equation \eqref{eq:utility} that $u_k(\x)=\frac{1}{w_k}u_k^\B(\x)$ for all $\x\in \Delta$ and $k\in V$ since
\[u_k^\B(\x)=\sum_{[k,l]\in E} \x_k^\top \B_{kl} \x_l =\sum_{[k,l]\in E}\x_k^\top \<w_k\A_{kl}\>\x_l=w_k\sum_{[k,l]\in E}\x_k^\top \A_{kl}\x_l=w_ku_k(\x).\]
Moreover, $\Gamma^\B$ is an \emph{unweighted} zero-sum game since
\[\sum_{k\in V}u_k^\B(\x)=\sum_{k\in V}w_ku_k(\x)=0,\]
where the last equality follows from the fact that $\Gamma$ is a weighted zero-sum polymatrix game with weights $(w_k)_{k\in V}$. Thus, transforming $\Gamma$ to $\Gamma^\B$ and then applying the transformation of \cite{Cai16} on $\Gamma^\B$ to obtain the payoff equivalent (to $\Gamma^\B$) game $\hat\Gamma$ with payoff matrices $\hat \A_{kl}, [k,l]\in E$ yields the result, i.e., the $1/w_k$-payoff equivalence between the original weighted zero-sum polymatrix game $\Gamma$ and the pairwise constant-sum, unweighted polymatrix game $\hat\Gamma$. 
\end{proof}

Using \Cref{prop:transformation}, we can now prove \Cref{lem:summation} for general $k\ge2$.
\begin{proof}[Proof of \Cref{lem:summation}]
Equation \eqref{eq:appendix:trans2} in \Cref{prop:transformation} implies that
\begin{align*}
\sum_{k\in V}w_k\lt \x_k^\top \rp+\p_k^\top\rk\rt& \overset{\eqref{eq:appendix:trans2}}{=}\sum_{k\in V}w_k\lt \x_k^\top \<\frac{1}{w_k}\hat r_k(\x_{-p})\>+\p_k^\top\<\frac{1}{w_k}\hat r_k(\x_{-k})\>\rt\\
\overset{\phantom{(18)}}{=} &\sum_{k\in V}\sum_{[k,l]\in E} \lt \x_k^\top \hat \A_{kl}\p_l+\p_k^\top \hat \A_{kl}\x_l\rt\\
\overset{\phantom{(18)}}{=} &\sum_{[k,l]\in E} \lt \x_k^\top \hat \A_{kl}\p_l+\p_k^\top \hat \A_{kl}\x_l +\x_l^\top \hat \A_{lk}\p_k+\p_l^\top \hat \A_{kl}\x_k\rt\\
\overset{\phantom{(18)}}{=} &\sum_{[k,l]\in E} \underbrace{\lt \x_k^\top \hat \A_{kl}\p_l+\p_l^\top \hat \A_{kl}\x_k\rt}_{=\, c_{kl}\text{ by equation \eqref{eq:appendix:trans1}}}+\underbrace{\lt\x_l^\top \hat \A_{lk}\p_k +\p_k^\top \hat \A_{kl}\x_l \rt}_{=\, c_{kl}\text{ by equation \eqref{eq:appendix:trans1}}}\\
\overset{\phantom{(18)}}{=}& 2\sum_{[k,l]\in E}c_{kl} = 0,
\end{align*}
where the last equality holds by equation \eqref{eq:appendix:trans3}.
\end{proof}

\subsection{Proof of \texorpdfstring{\Cref{thm:main}}{yyy}}

Combining \Cref{lem:tderivative,lem:summation}, we can now restate and prove \Cref{thm:main}.

{\bf\Cref{thm:main}.}
{\em 
Let $\Gamma$ be a rescaled zero-sum polymatrix game, with positive exploration rates $T_k$.  There exists a unique QRE $\p$ such that if $\x(t)$ is any trajectory of the associated Q-learning dynamics $\dot \x = f(\x)$,  with $f_i$ is given via \eqref{eq:kdynamics}, where $\x(0)$ is an interior point, then $\x(t)$ converges to $\p$ exponentially fast.  In particular, we have that 
\begin{equation}\label{eqn:appendix:KLlyapunov}
\frac{d}{dt}\wKL(\p\|\x(t))=-\sum\nolimits_{k\in V}w_kT_k\lt \KL(\p_k\|\x_k)+\KL(\x_k\|\p_k) \rt.
\end{equation}
% Let $(\x(t))_{t\ge0}$ denote the system trajectory that is generated by the Q-learning dynamics when applied to a rescaled zero-sum polymatrix game $\Gamma$. Then, for any interior starting point, $\x(0)$, and positive exploration rates $T_k, k\in V$, $(\x(t))_{t\ge0}$ converges to the unique Quantal Response Equilibrium of $\Gamma$. Specifically, the weighted KL-divergence 
%   %$$H(q, \Phi(x_0,t))= - \sum_i\sum_{R\in S_i} q_{iR} \cdot \ln(x_{iR})$$ 
%  %$$V\big((\mathbf{p},\mathbf{q});(\mathbf{x}(t),\mathbf{y}(t)) \big) = - \sum_i p_i\ln x_i(t) - \sum_j q_i\ln y_j(t)$$
% \[\wKL\<\p\| \x(t)\> =\sum_{k\in V}w_k\KL(\p_k\|\x_k(t)),\]
% between the unique QRE, $\p$, and the state, $\x(t)$, of the system at time $t\ge0$ is converging to zero exponentially fast as $t\to\infty$.
}
\begin{proof}

\Cref{thm:preliminary} states that there must exist some a QRE equilibrium $\p=(\p_k)$.  We first establish \eqref{eqn:appendix:KLlyapunov}, from which the remaining statements will follow. %The support of the state of the system (e.g., the strategies played with positive probability) is clearly an invariant of the flow; hence, it suffices to prove the statement of \Cref{thm:main}, for each possible starting point $\x(0)=(\x_k(0))_{k\in V}\in \Delta$ with $\x_k(0)\in \Delta_k$. 
The time derivative of the KL-divergence between $\p$ and $\x(t)$ is given by (as in the proof of \Cref{lem:tderivative}, we again omit the dependence of $\x$ on $t$ after the first line to simplify notation)
\begin{align*}
\frac{d}{dt}&\KL(\p\|\x(t))=~\sum_{k\in V}\frac{d}{dt}w_k\KL(\p_k\|\x_k)\\
&=~\sum_{k\in V}w_k\lt (\x_k-\p_k)^\top \lt \rk-r_{k}(\p_{-k})\rt-T_k\< \KL(\p_k\|\x_k)+\KL(\x_k\|\p_k) \>\rt\\
&=~\sum_{k\in V}w_k(\x_k-\p_k)^\top \lt \rk-r_{k}(\p_{-k})\rt-\sum_{k\in V}w_kT_k\lt \KL(\p_k\|\x_k)+\KL(\x_k\|\p_k) \rt.
%
%&=\sum_{k\in V} (\x_k-\p_k)^\top \lt \rk -T_k\ln{(\x_k)}\rt -\underbrace{\sum_{k\in V}(\x_k-\p_k)^\top \lt r_{k}(\p_{-k})-T_k \ln{(\p_k)}\rt}_{= \, 0 \text{ by equation }\eqref{eq:appendix:eqprop}}\\
%& = \sum_{k\in V} (\x_k-\p_k)^\top \lt \rk-r_{k}(\p_{-k})\rt-\sum_{k\in V}T_k(\x_k-\p_k)^\top\lt \ln{(\x_k)}-\ln{(\p_k)}\rt\\
%&= -\sum_{k\in V}\lt \x_k^\top \rp+\p_k^\top\rk\rt-\sum_{k\in V}T_k\lt \KL(\p_k\|\x_k)+\KL(\x_k\|\p_k)\rt,
\end{align*}
The first term in the right hand side of the last equation is equal to zero, since
\begin{align*}
\sum_{k\in V}w_k(\x_k-\p_k)^\top \lt \rk-r_{k}(\p_{-k})\rt&~=~\sum_{k\in V}w_k\x_k^\top \rk+\sum_{k\in V}w_k\p_k^\top r_k(\p_{-k})\\
&~~+\sum_{k\in V}w_k\lt \x_k^\top \rp+\p_k^\top\rk\rt=0,
\end{align*}
where in the last equality, we used the zero-sum property (cf. equation \eqref{eq:zsprop}) to conclude that the terms $\sum_{k\in V}w_k\x_k^\top\rk$ and $\sum_{k\in V}w_k\p_k^\top \rp$ are equal to $0$ and equation \eqref{eq:appendix:summation} in \Cref{lem:summation} to conclude that the term $\sum_{k\in V}w_k\lt \x_k^\top \rp+\p_k^\top\rk\rt$ is also equal to $0$. Thus, 
\begin{align}\label{eq:appendix:KLDeriv}
\frac{d}{dt}\KL(\p\|\x(t))=-\sum_{k\in V}w_kT_k\lt \KL(\p_k\|\x_k)+\KL(\x_k\|\p_k) \rt ,
\end{align}
which implies that $\frac{d}{dt}\KL(\p\|\x)<0$ for all $\x\neq \p$ as $T_k>0$ for each $k$.  In other words,  $\Phi(\x):=\KL(\p\|\x)$ obeys the properties i) $\Phi(\p)=0$, ii) $\Phi(\x)> 0$ if $\x\neq \p$ and iii) $\dot \Phi(\x) <0$ for $\x\neq \p$, i.e., $\Phi$  is a Lyapunov function for the Q-learning dynamics \eqref{eq:kdynamics}. In particular, since $\KL\<\x_k \| \p_k \>>0$ as long as the distributions $\p_k,\x_k$ are not equal, then
\begin{equation}\label{eqn:phiexpstability}
\dot  \Phi(x) \leq  -\min_k T_k \cdot \Phi(x),
\end{equation}
thus the KL-divergence converges to zero exponentially fast.  %The limit point of the flow $\x(t)=(\x_k(t))_{k\in V}$ corresponding to the QRE $\p=(\p_k)_{k\in V}$. 
As any QRE $\p'\neq \p$  must also satisfy $\dot \Phi(\p')=\nabla \Phi\cdot f(\p') =0$ (as $\p'$ is a fixed point for $\dot \x = f(\x)$, where $f$ is given in \eqref{eq:kdynamics}), but we have that $\dot \Phi(\x)<0$ for all $\x\neq \p$, it follows that $\p$ is unique.
%Uniqueness of the QRE, $\p$, follows immediately from the convergence part. Specifically, if there was another QRE $\p'\neq \p$, then the previous argument implies that $\x(t)\to \p$ and $\x(t)\to \p'$ which is clearly a contradiction. This concludes the proof of \Cref{thm:main}.
\end{proof}

\section{Additional Experiments}\label{app:experiments}

In the section, we present additional simulations and calculations that complement our experimental results in \Cref{exp:network} in the main paper. 

\subsection{Q-learning dynamics in the AMPs game}
We start with \Cref{fig:surface_sup} which completes the possible combinations of the two representative policies that we consider, CLR-1 and ETE, in the AMPs game (cf. \Cref{exp:network}). The plots are similar to that in \Cref{fig:surface}.
 
\begin{figure}[!tb]
    \centering
    \raisebox{-0.5cm}{\includegraphics[width=0.49\linewidth]{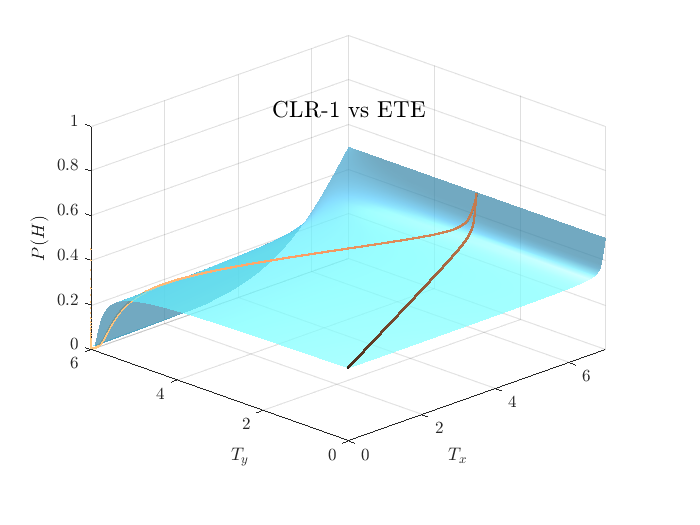}}\hfill
    \raisebox{-0.5cm}{\includegraphics[width=0.49\linewidth]{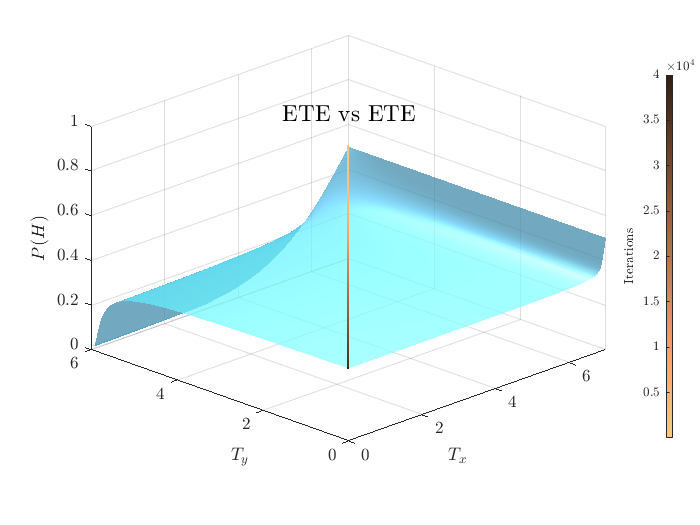}}\\
    \includegraphics[width=0.23\linewidth]{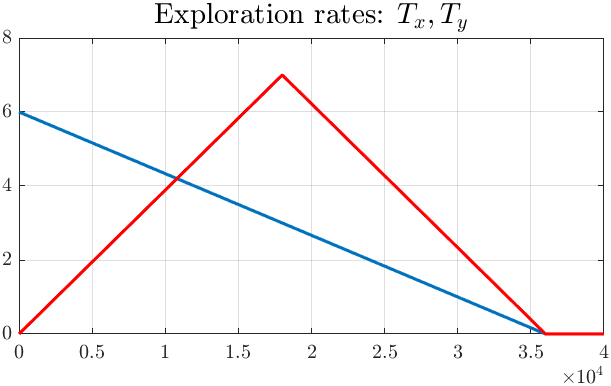}\hspace{2pt}
    \includegraphics[width=0.23\linewidth]{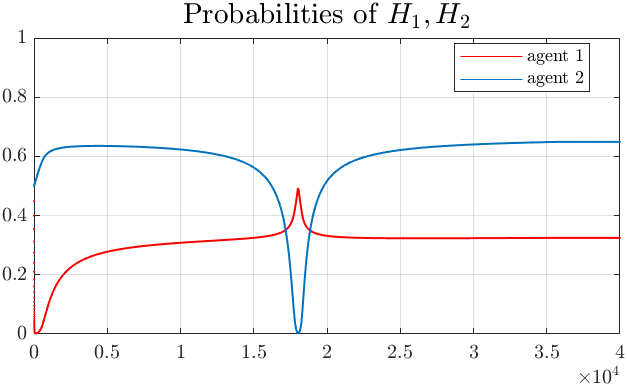}\hfill
    \includegraphics[width=0.23\linewidth]{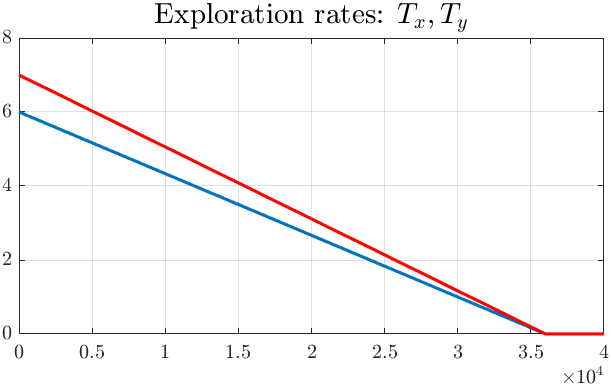}\hspace{2pt}
    \includegraphics[width=0.23\linewidth]{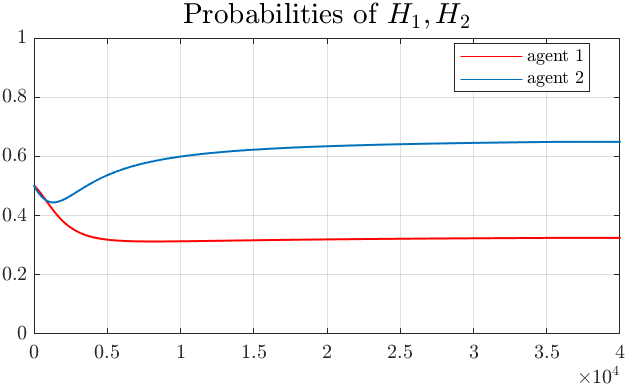}
    %\hspace{3pt}
    %\includegraphics[width=0.32\linewidth]{Figures/figure_main_CLR-1_ETE.png}
    \caption{QRE surface and exploration paths (upper panels) for two different exploration policy profiles (lower panels) in the Asymmetric MPs game. As in \Cref{fig:surface} in the main part, the sequence of play converges to the unique QRE for any combination of exploration policies (CLR-1 and ETE). As the exploration rates decrease zero, the sequence of play converges to the unique Nash equilibrium of the game.}
    \label{fig:surface_sup}
\end{figure} 
 
 \subsection{Q-learning dynamics in two-player games with more actions}\label{sub:more_actions}

We next turn to visualizations of the Q-learning in games with two-players but more than two actions for each player. For this purpose, we consider the (symmetric) zero-sum game, Rock-Paper-Scissors (RPS) with payoff matrices given by 
\[
\A=\begin{pmatrix}\phantom{-}0 & -1 & \phantom{-}1\\ \phantom{-}1 & \phantom{-}0 & -1 \\ -1 & \phantom{-}1 & \phantom{-}0\end{pmatrix},\quad \B=-\A^{\top}.\]

The RPS game has a unique (interior) Nash equilibrium given by $(p^*,q^*)=((1/3,1/3,1/3)$, $(1/3,1/3,1/3))$. This is also the unique QRE for any positive exploration rates (since exploration favors the uniform distribution). In \Cref{fig:rps}, we visualize the Q-learning dynamics (cf. equation \eqref{eq:kdynamics}) for $T_x=0$ and various values of $T_y$. The setup and the plots are similar to that of \Cref{fig:zero}. 
\begin{figure}[!tb]
    \centering
    \includegraphics[width=0.32\linewidth]{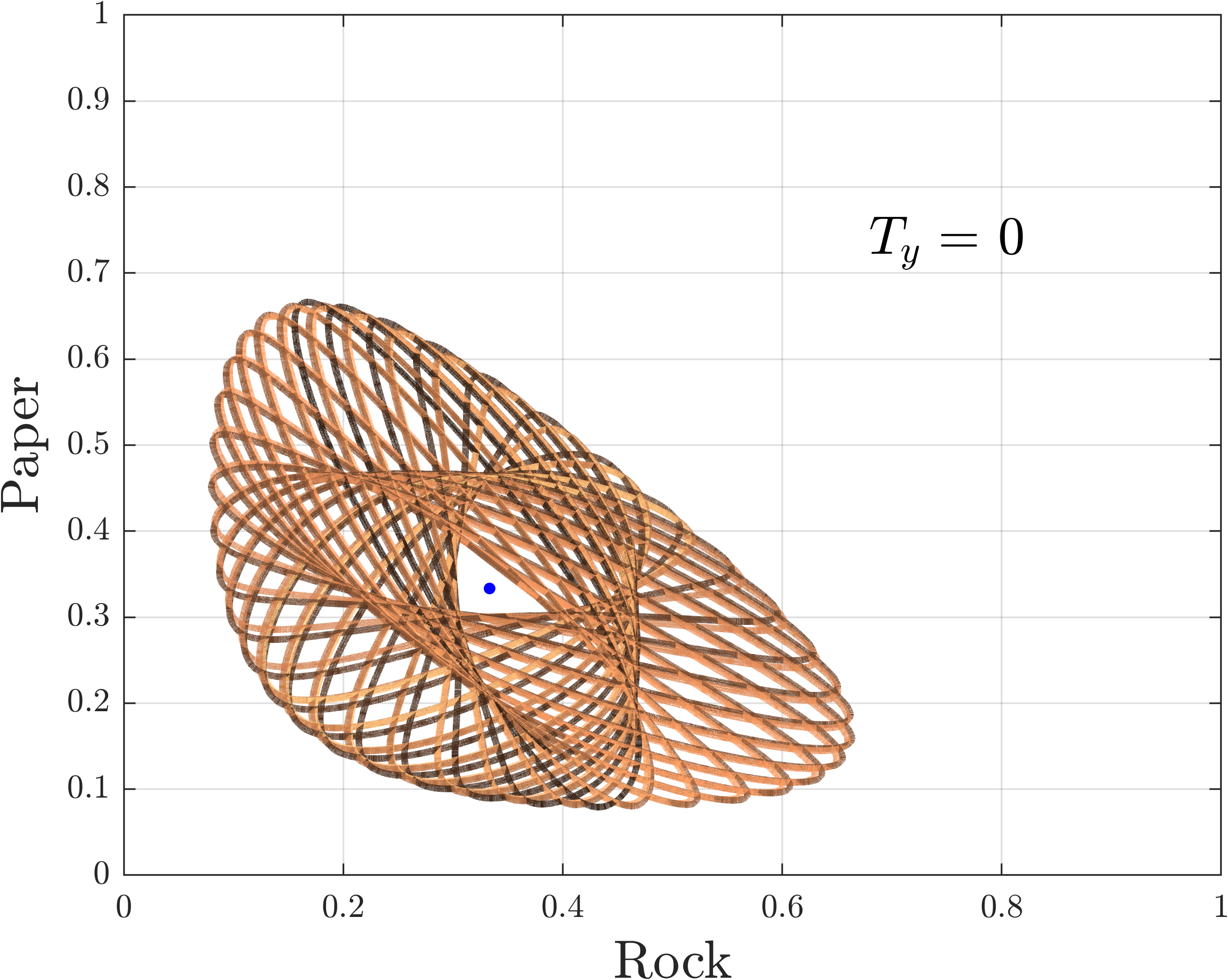}\hspace{5pt}
    \includegraphics[width=0.32\linewidth]{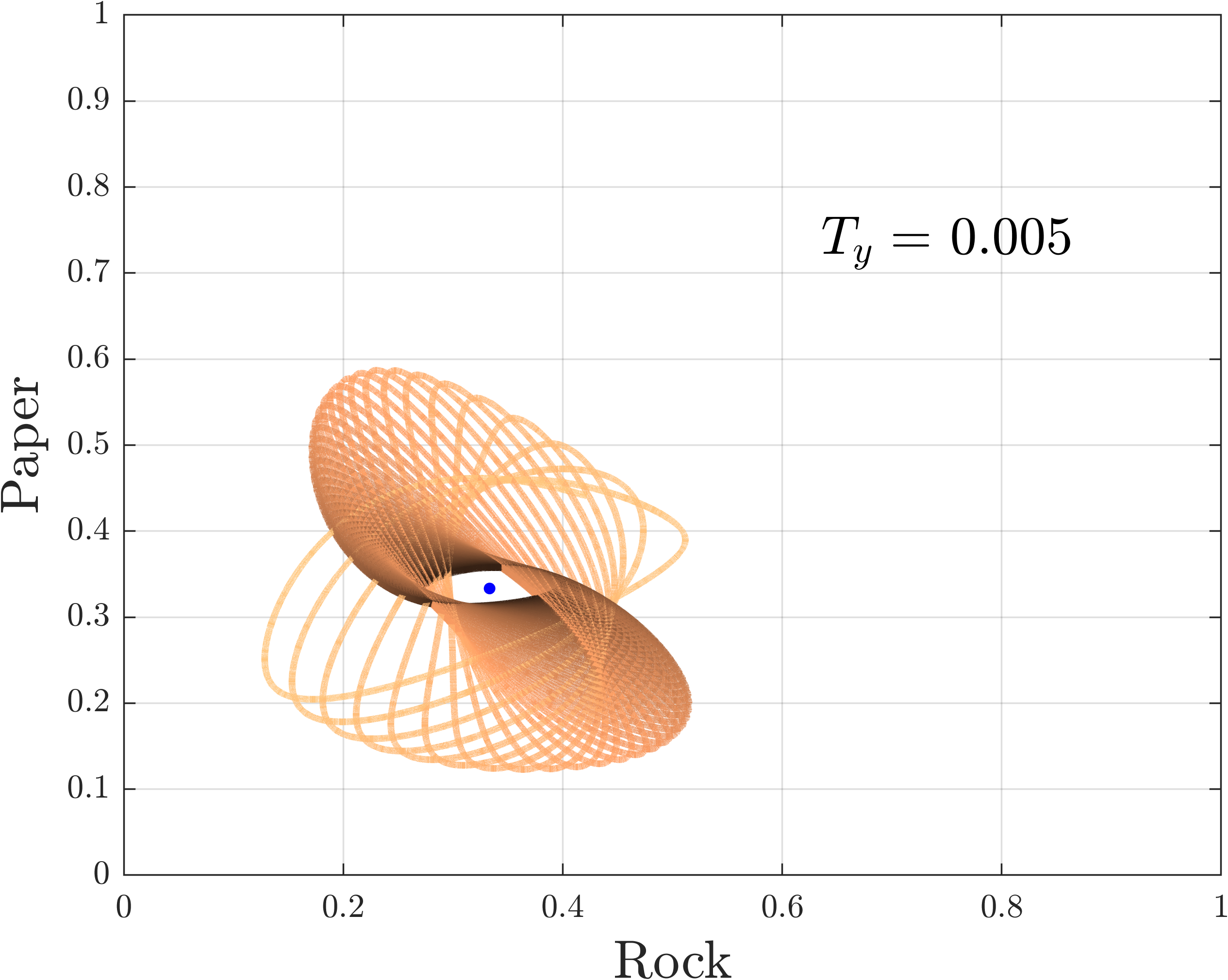}\hspace{5pt}
    \includegraphics[width=0.32\linewidth]{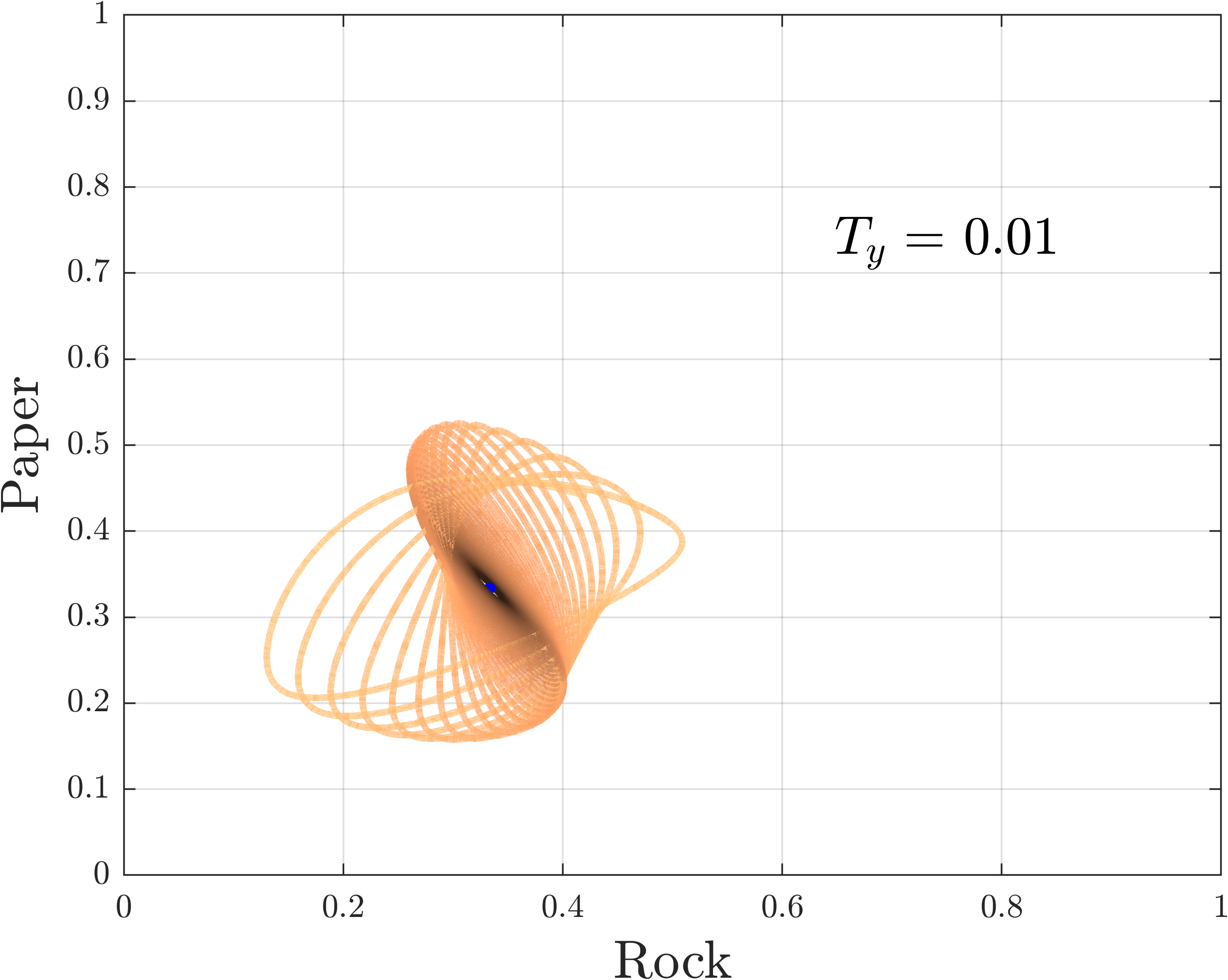}\\[0.2cm]
    \includegraphics[width=0.32\linewidth]{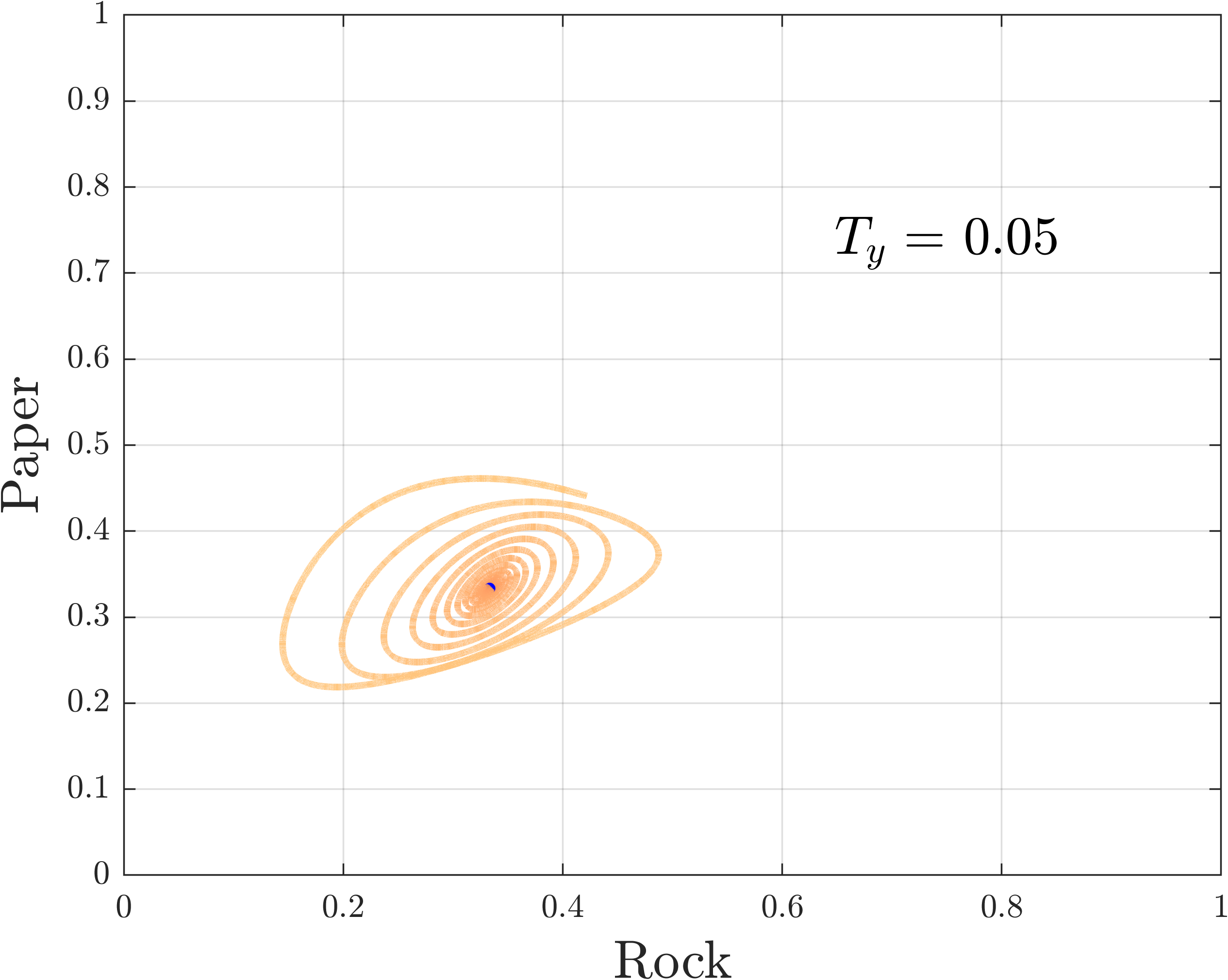}\hspace{5pt}
    \includegraphics[width=0.32\linewidth]{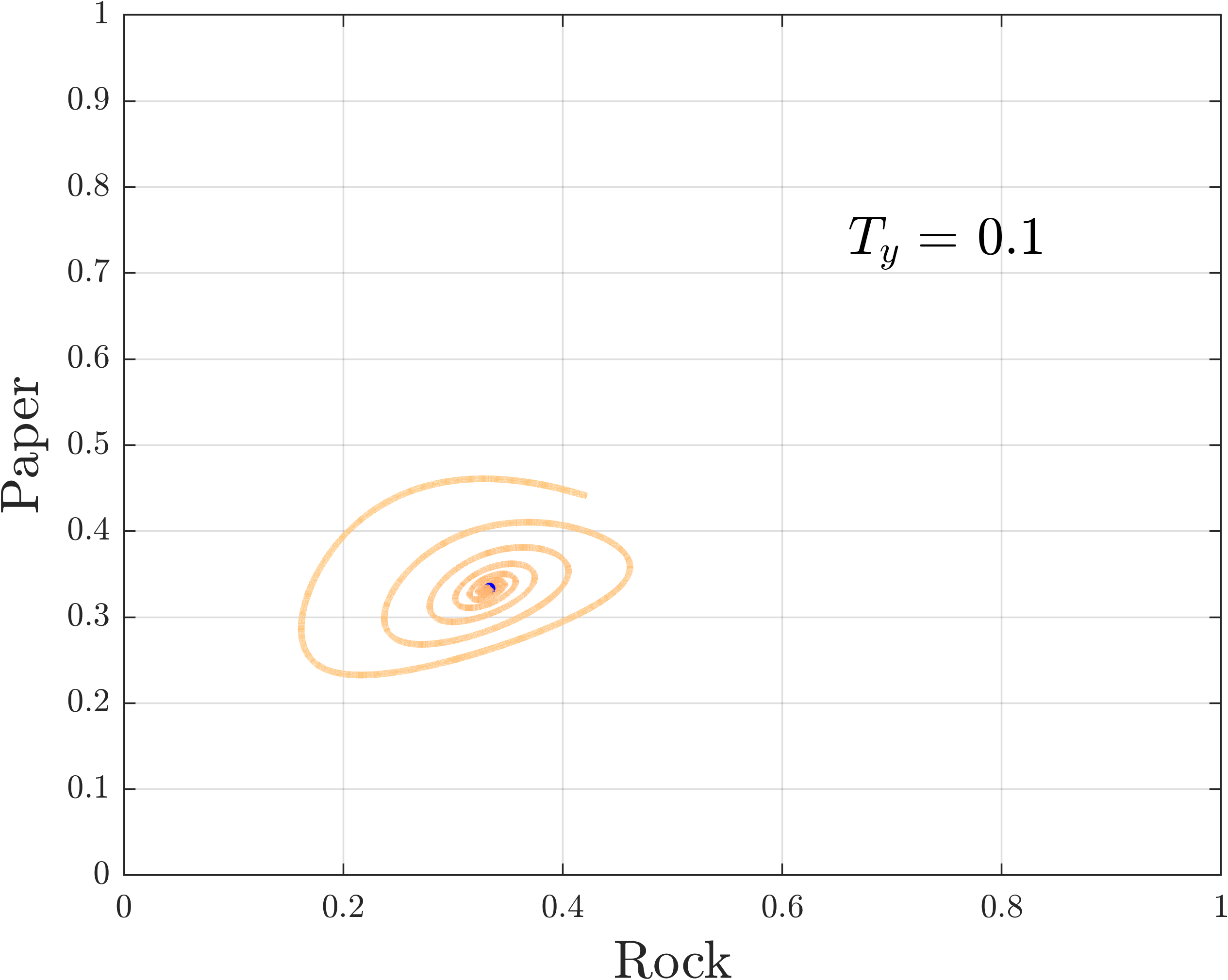}\hspace{5pt}
    \includegraphics[width=0.32\linewidth]{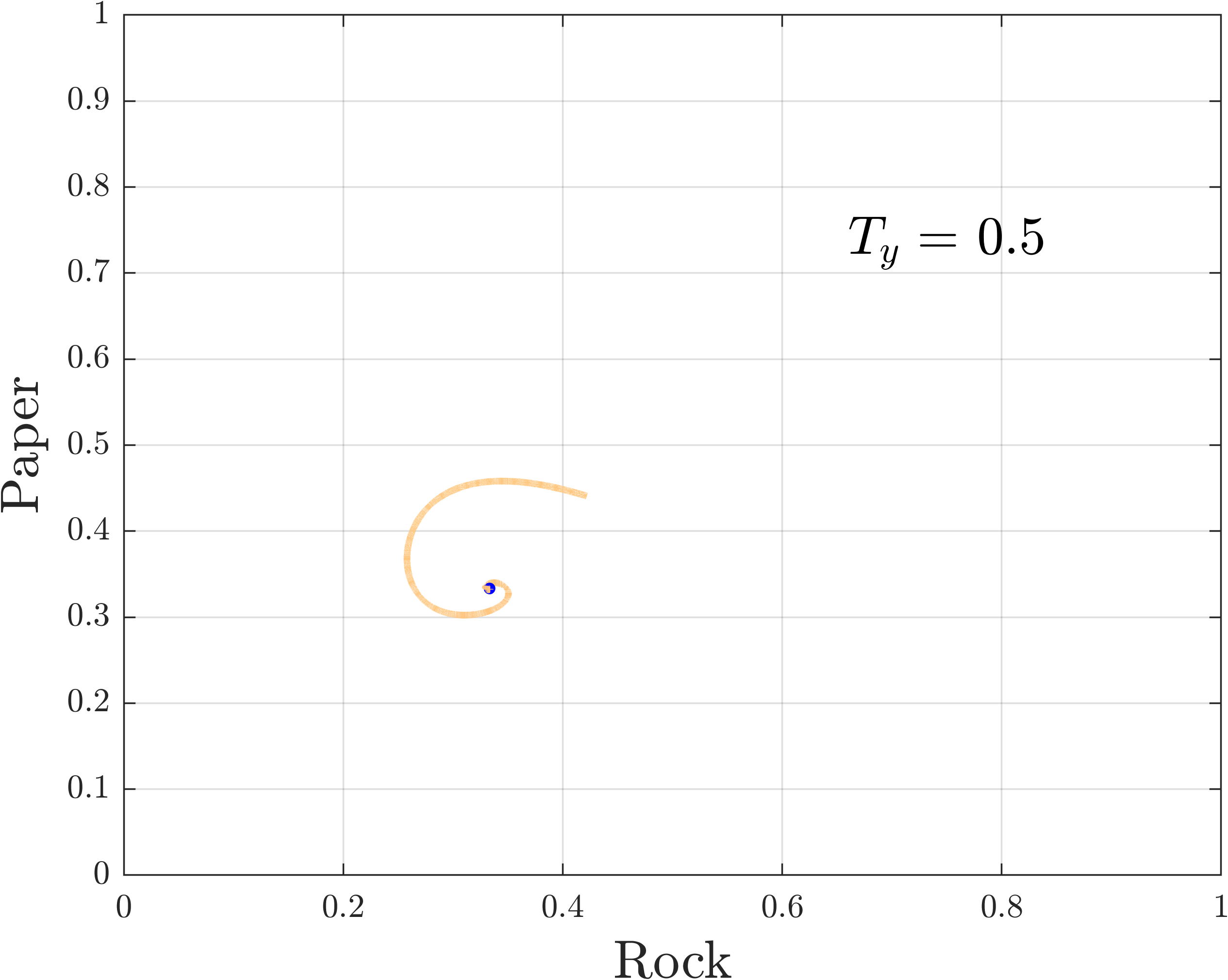}
    \caption{Q-learning dynamics in Rock-Paper-Scissors for $T_x=0$ (no exploration by $x$-agent) and six different exploration rates, $T_y\ge0$ by $y$-agent. The light to dark trajectories (with darkening color indicating increasing time) show the choice distribution for the $x$-agent in the Rock-Paper space.}
    \label{fig:rps}
\end{figure}

The trajectories in \Cref{fig:rps} have been generated for random initial conditions (similar plots are obtained for any other initial condition) and show the choice distribution of the $x$-agent at each time point of the simulation (we used $2\times10^7$ iterations with a step $0.0003$ for the discretization of the continuous time ODE in equation \eqref{eq:kdynamics}). We obtain similar plots for the exploring agent ($y$-agent), see \Cref{fig:rps_y}. When $T_y=0$, the Q-learning dynamics reduce to the replicator dynamics and we recover their cyclic behavior (Poincar{\'e} Recurrence) around the unique interior Nash equilibrium (upper left panel) (see \cite{Per20} and references therein). In all other cases, exploration by one agent suffices for the convergence of the joint-learning dynamics as in the AMPs game (in panels 2 and 3, the dynamics spiral inwards and will eventually converge to the QRE (blue dot)). As we saw in \Cref{exp:network} this is in sharp contrast to the \eqref{eq:network} game in which exploration (even) by several agents was not sufficient for the convergence of Q-learning to a single QRE. It is worth mentioning that this behavior of the Q-learning dynamics in RPS does not rely on the symmetry of the game. We obtain similar plots for the a modified RPS game (with asymmetric Nash equilibrium) (not depicted here).

\begin{figure}[!tb]
    \centering
    \includegraphics[width=0.32\linewidth]{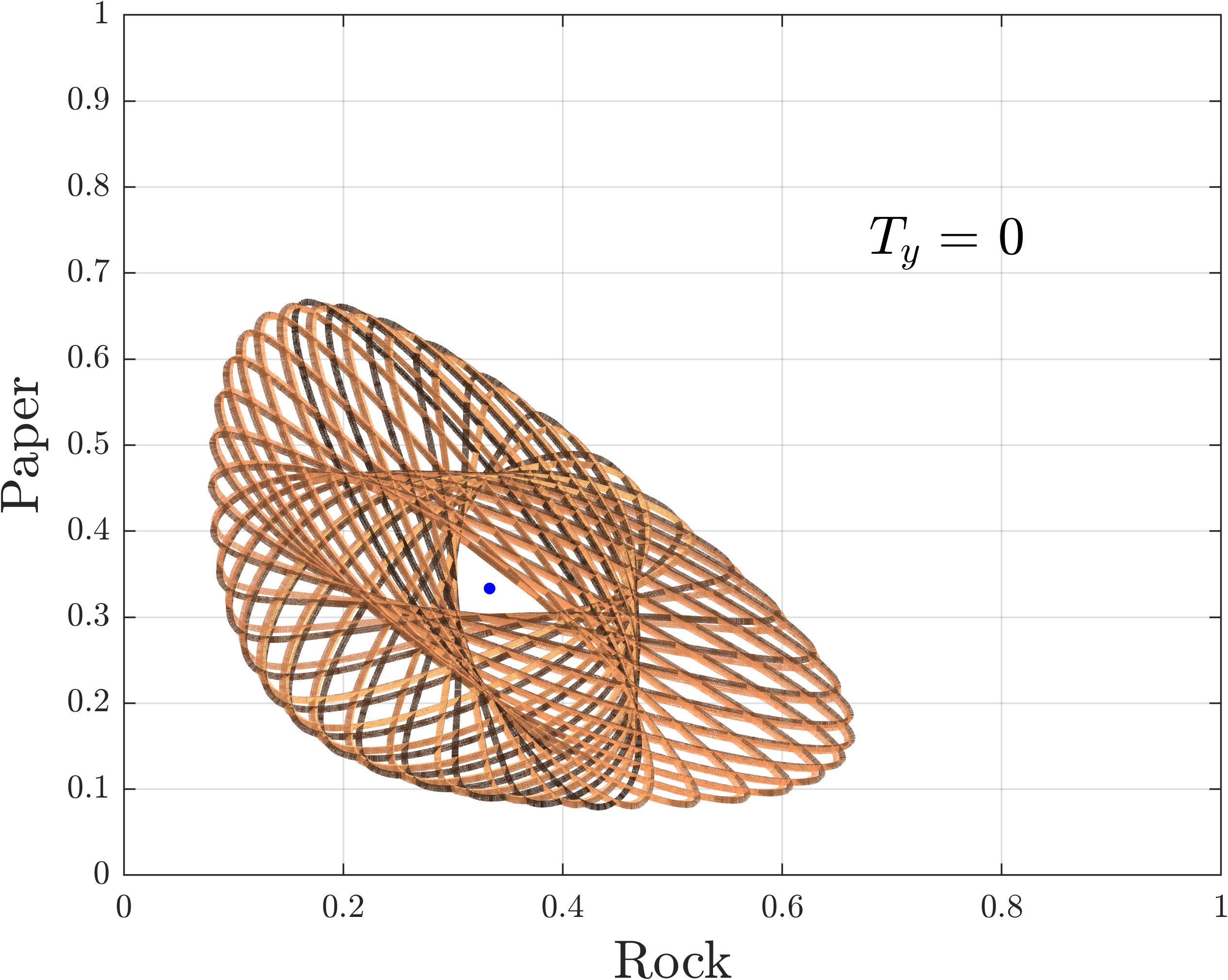}\hspace{5pt}
    \includegraphics[width=0.32\linewidth]{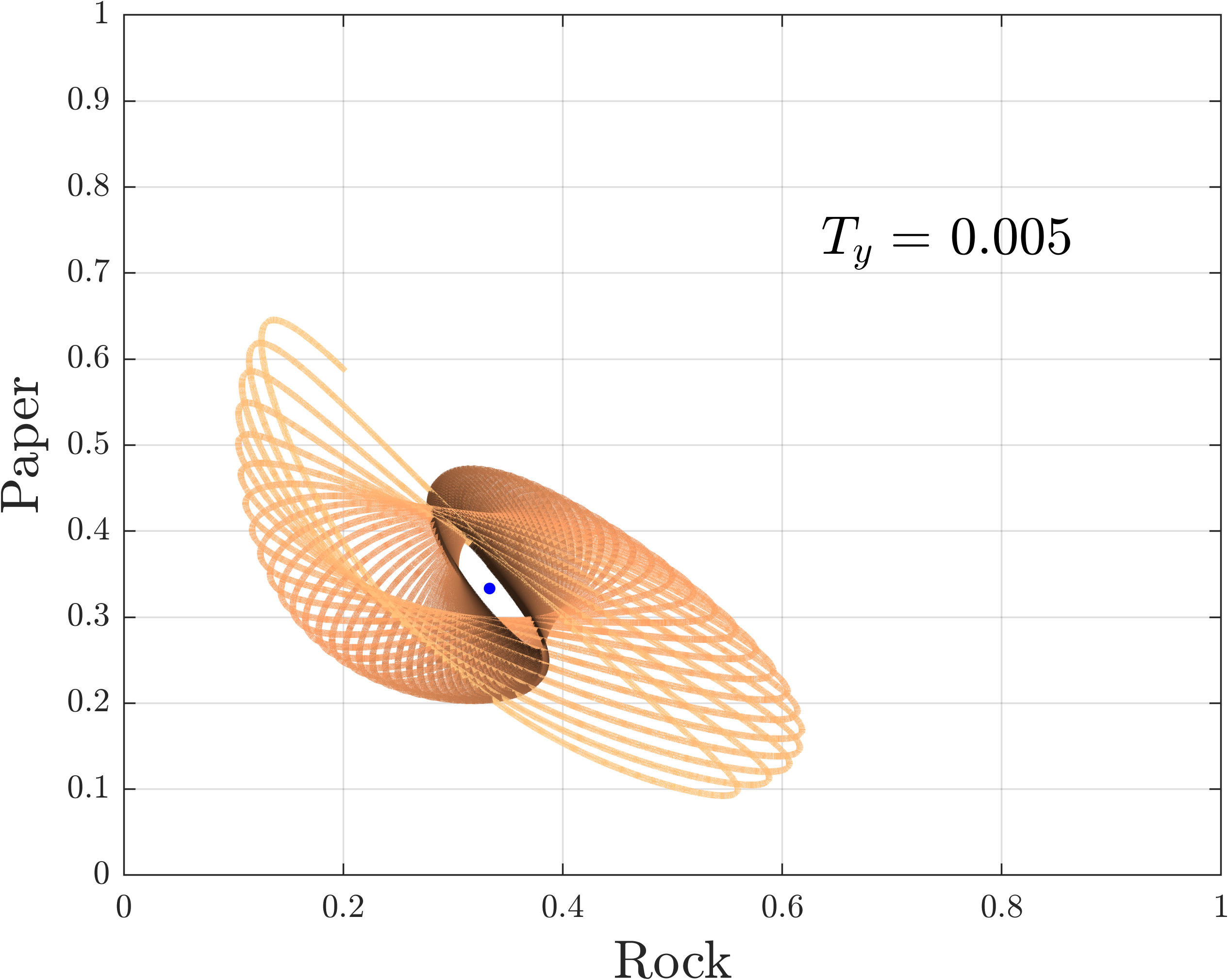}\hspace{5pt}
    \includegraphics[width=0.32\linewidth]{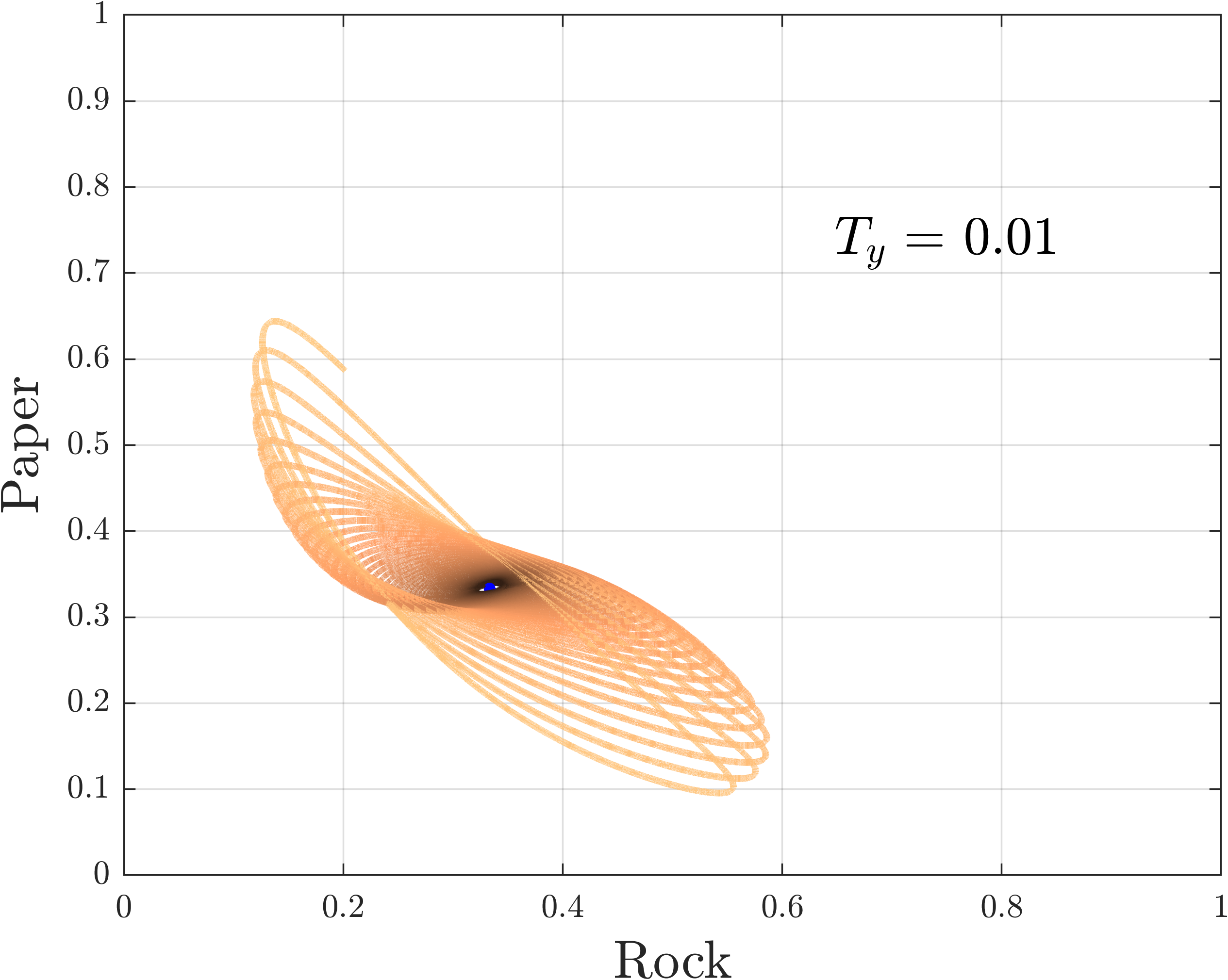}\\[0.2cm]
    \includegraphics[width=0.32\linewidth]{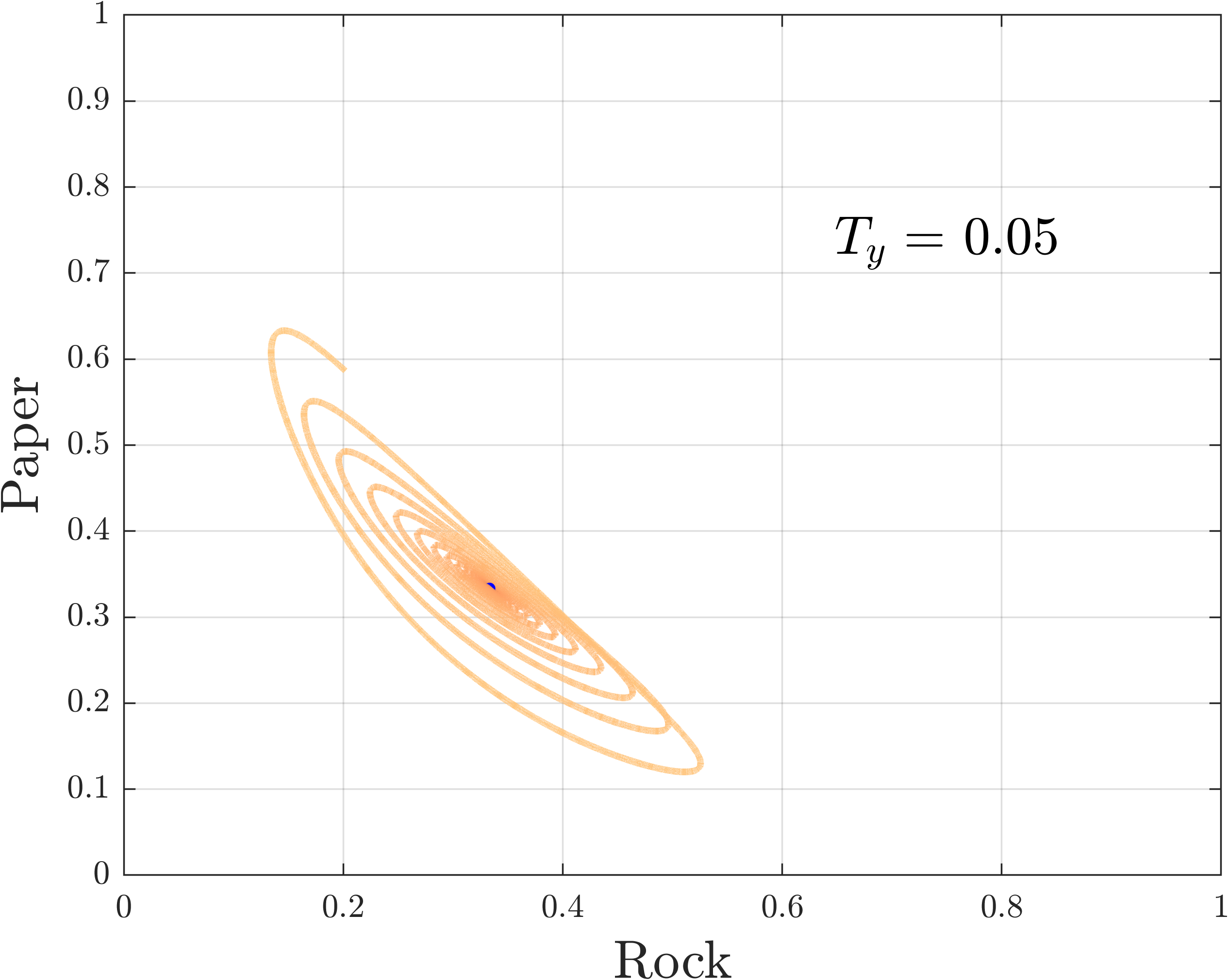}\hspace{5pt}
    \includegraphics[width=0.32\linewidth]{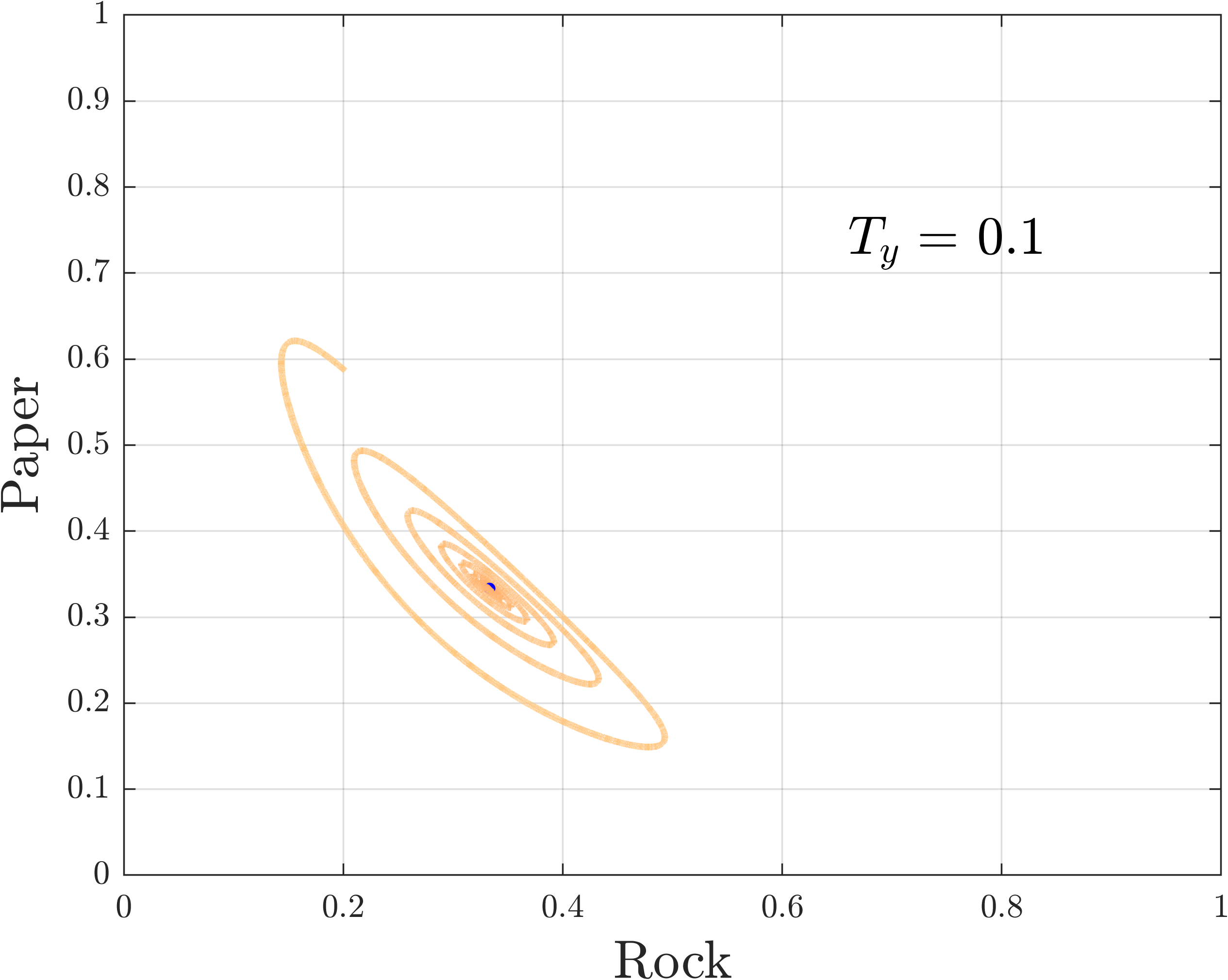}\hspace{5pt}
    \includegraphics[width=0.32\linewidth]{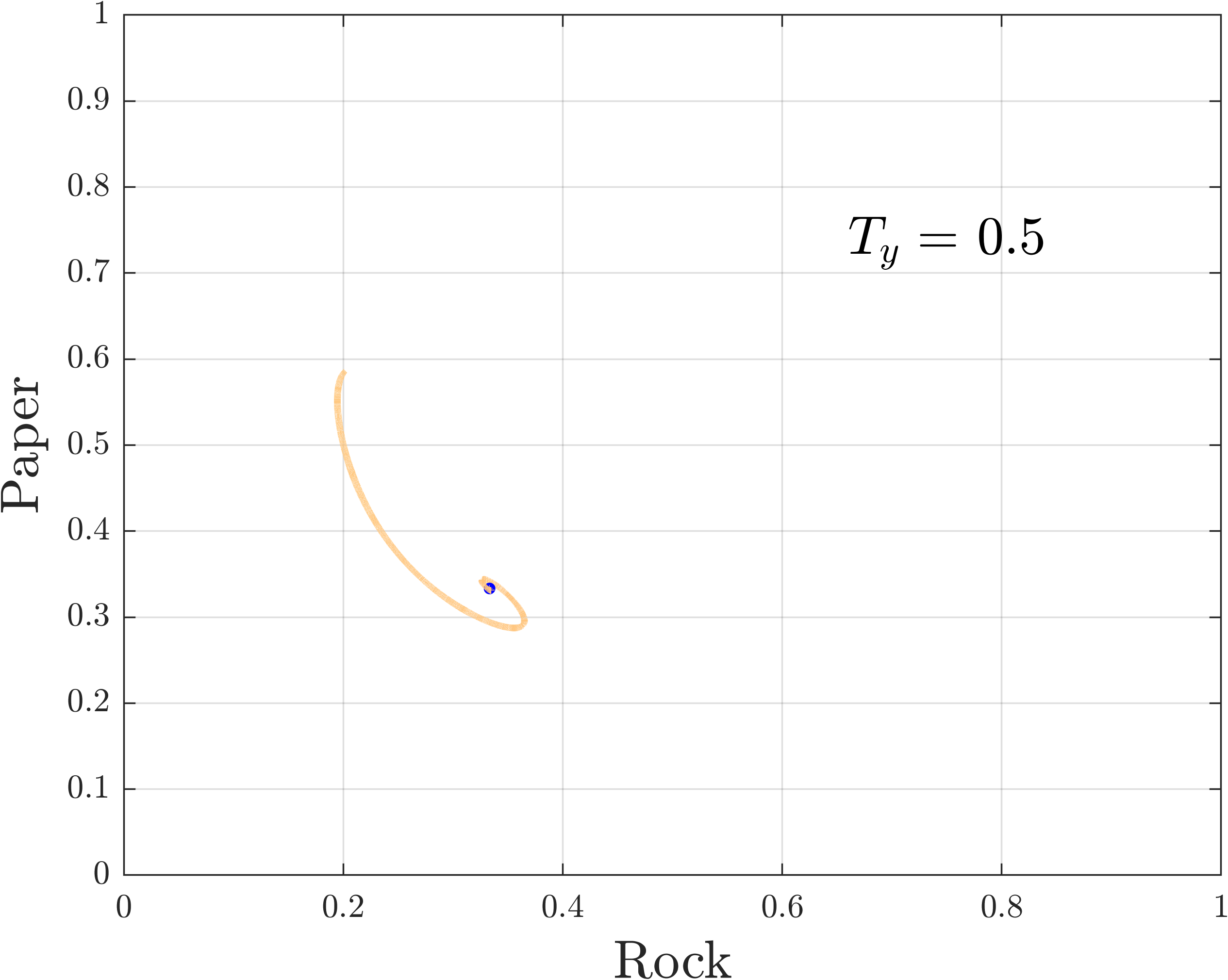}
    \caption{Q-learning trajectories for the $y$-agent in the instances of the RPS game that are shown in \Cref{fig:rps}.}
    \label{fig:rps_y}
\end{figure}
 
\subsection{Edge Case: Exploration by one player in \texorpdfstring{$2x2$}{lll} games}
Up to now, we have treated the case of exploration by only agent experimentally (cf. \Cref{fig:zero,fig:rps,fig:rps_y}). In this part, we consider the edge-case in which only one of two players is exploring in $2x2$ games which is also analytically tractable. Our result is presented in \Cref{prop:2by2} and the ensuing intuition is summarized in \Cref{rem:intuition}. For the following (technical) calculations, we will use the notation 
\[\A=\begin{pmatrix}a_{11}& a_{12}\\ a_{21} & a_{22}\end{pmatrix}, \quad \B=\begin{pmatrix} b_{11} & b_{12} \\ b_{21} & b_{22}\end{pmatrix}.\]

We assume that the game has a (unique) interior Nash equilibrium. This implies (without loss of generality, see e.g., \cite{Pan17}) that 
\begin{equation}\label{eq:interior}
a_{11}>a_{21}, \;a_{12}<a_{22}, \;\;\text{ and }\;\; b_{11}<b_{21},\; b_{12}>b_{22}.
\end{equation}
By letting $a_1:=a_{12}-a_{22}$ and $a_2:=a_{12}+a_{21}-a_{11}-a_{22}$ with $a_1,a_2<0$ and $b_1:=b_{12}-b_{22}$ and $b_2:=b_{12}+b_{21}-b_{11}-b_{22}$ with $b_1,b_2>0$, we have that the unique Nash equilibrium, $(\p,\q)$, with $\p=(p,1-p),\q=(q,1-q)$ of this game is given by $q=\frac{a_1}{a_2}$ and $p=\frac{b_1}{b_2}$ with $p,q\in (0,1)$ by assumption.\par
Since players have two strategies, we can represent their mixed strategies by the vectors $\x=(x,1-x)$ and $\y=(y,1-y)$ with $x,y \in[0,1]$. Using this notation and under the assumption that $T_x=0$, i.e., that the $x$ player is not exploring, the dynamics for the $x$ player in equation \eqref{eq:kdynamics} take the form
\begin{align*}
\dot x &= x \lt (1,\; 0)\A \dbinom{y}{1-y}-(x,\; 1-x)\A\dbinom{y}{1-y}\rt = x(1-x)\lt a_1-a_2y\rt
\intertext{ by assumption \eqref{eq:interior}. Similarly, the dynamics for the $y$ player in equation \eqref{eq:kdynamics} take the form}
\dot y &= y \lt (1,\; 0)\B \dbinom{x}{1-x}-(y,\; 1-y)\B\dbinom{x}{1-x}+T_y(y\ln{y}+(1-y)\ln{1-y}-\ln{y})\rt \\&= y(1-y)\lt b_1-b_2x+T_y\ln{\<\frac1y-1\>}\rt
\end{align*}
where we defined $b_1:=b_{12}-b_{22}$ and $b_2:=b_{12}+b_{21}-b_{11}-b_{22}$ with $b_1,b_2>0$ by assumption \eqref{eq:interior}. For the $y$ player, we will consider various values for $T_y\ge0$. Putting these together, we obtain the system
\begin{align}\label{eq:2by2}
\dot x &= x(1-x)\lt a_1-a_2y\rt\nonumber\\
\dot y &= y(1-y)\lt b_1-b_2x+T_y\ln{\<\frac1y-1\>}\rt
\end{align}
with $x,y\in[0,1]$ and $T_y\ge0$. It is immediate that all four points $(x,y)\in \{0,1\}\times \{0,1\}$ are resting point of the system. All these points lie on the boundary. For the system to have an interior resting point, we have the conditions
\begin{equation}\label{eq:conditions}
y=\frac{a_1}{a_2}, \quad \text{ and } \quad x =\frac{1}{b_2}\lt b_1+T_y\ln{\<\frac{a_2}{a_1}-1\>}\rt. 
\end{equation}
The condition for $y$ is always satisfied by assumption \eqref{eq:interior}. In particular, this yields the Nash equilibrium strategy for the $y$ player. However, the condition for $x$ may yield an $x$ that does not lie within $(0,1)$. For $T_y=0$, the condition becomes $x=b_1/b_2$ which is the Nash equilibrium strategy for the $x$ player. By assumption, this is strictly between $0$ and $1$. However, $x$ depends linearly on $T_y$ and depending on whether $a_2/a_1>2$ or $a_2/a_1<2$ it either increases or decreases in $T_y$.\footnote{The case $a_2=2a_1$ is trivial since it implies that $y=1/2$ and $x=b_1/b_2$ regardless of $T_y$.} Thus, there exists a critical threshold, $T_y^*$, for which $x$ hits the boundary of $[0,1]$, i.e., it either becomes $0$ or $1$. Assume that $x=1$ without loss of generality. At that point, the upper equation in \eqref{eq:2by2} is satisfied regardless of whether $y=a_1/a_2$. Thus, we turn to the lower equation in \eqref{eq:2by2} to obtain a condition for $y$, which yields  
\[b_1-b_2\cdot 1+T_y\ln{\<\frac1y-1\>}=0 \iff y=\lt 1+\exp{\<(b_2-b_1)/T_y\>}\rt^{-1}.\]
Note, that in all the above cases, if an interior resting point exists (for either or both players), then it is unique. We summarize our findings for the $2\times 2$ case in \Cref{prop:2by2}. 

\begin{proposition}\label{prop:2by2}
Let $\Gamma=(\{1,2\},\A,\B)$ with $\A,\B\in \mathbb R^{2\times2}$ be a two-player, two-strategy ($2\times2$) game with $a_2>2a_1$ and a unique interior Nash equilibrium, i.e., such that condition \eqref{eq:interior} holds, and let \[T_y^{\text{crit}}:=(b_2-b_1)\cdot \lt\ln{\<\frac{a_2}{a_1}-1\>}\rt^{-1}.\]
If $T_x=0$, then for any interior starting point $(x_0,y_0)\in(0,1)$, the fixed points $(\p,\q)=((p,1-p), (q,1-q))$ with $p,q\in[0,1]$ of the Q-learning dynamics 
\begin{align*}
\dot x &= x(1-x)\lt a_1-a_2y\rt\nonumber\\
\dot y &= y(1-y)\lt b_1-b_2x+T_y\ln{\<\frac1y-1\>}\rt
\end{align*}
depend on $T_y$ as follows:
\begin{itemize}[noitemsep,leftmargin=*]
\item if $T_y=0$, then the dynamics are cyclic, i.e., they do not have a resting point.
\item if $0<T_y< T_y^{\text{crit}}$, then they have a unique interior resting point which is given by
\[(p,q)=\<\frac{1}{b_2}\lt b_1+T_y\ln{\<\frac{a_2}{a_1}-1\>}\rt,\;\frac{a_1}{a_2}\>\;\; \text{ with } p,q \in (0,1).\]
\item otherwise, i.e., if $T_y>T_y^{\text{crit}}$, then they have a resting point that lies in the interior only for the exploring player which is given by
\[(p,q)=\<1, \; \lt 1+\exp{\<(b_2-b_1)/T_y\>}\rt^{-1}\>, \;\; \text{ with } q\in (0,1).\]
In particular, as $T_y\to\infty$, it holds that $q\to1/2$.
\end{itemize}
\end{proposition}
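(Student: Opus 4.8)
The plan is a direct rest-point analysis of the planar system \eqref{eq:2by2}, carried out regime by regime, together with a first-integral argument for the degenerate case $T_y=0$. First I would locate all rest points. A point $(x,y)\in[0,1]^2$ is a rest point iff both right-hand sides of \eqref{eq:2by2} vanish. Since $a_1-a_2y$ does not vanish at $y\in\{0,1\}$ under \eqref{eq:interior}, the first equation forces $x\in\{0,1\}$ or $y=a_1/a_2$; inspecting the first equation along $y\in\{0,1\}$ (where $a_1\neq 0$ and $a_1\neq a_2$) shows the only rest points with $y$ on the boundary are the four corners. Hence, apart from the four (boundary) corners, every rest point lies either on an edge $x\in\{0,1\}$ with $y\in(0,1)$, or in the open square with $y=a_1/a_2$, and in each of these loci what remains is the single scalar equation $\dot y=0$.

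Next I would treat the two cases $T_y>0$. On the locus $y=a_1/a_2$, which lies in $(0,1)$ by \eqref{eq:interior}, the equation $\dot y=0$ reads $x=\frac{1}{b_2}\bigl(b_1+T_y\ln(a_2/a_1-1)\bigr)$, an affine function of $T_y$ equal to the Nash value $b_1/b_2\in(0,1)$ at $T_y=0$. A short sign computation from \eqref{eq:interior} and the standing hypothesis on $a_1,a_2$ (the borderline $a_2=2a_1$ being the trivial case of the footnote) shows this function is strictly monotone in $T_y$ and leaves $(0,1)$ through a single endpoint; setting it equal to that endpoint and solving gives the threshold $T_y^{\mathrm{crit}}=(b_2-b_1)/\ln(a_2/a_1-1)$, which is positive because $b_2-b_1=b_{21}-b_{11}>0$ by \eqref{eq:interior}. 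Thus for $0<T_y<T_y^{\mathrm{crit}}$ the computed $x$ lies in $(0,1)$, yielding the stated unique interior rest point, whereas for $T_y>T_y^{\mathrm{crit}}$ no rest point lies in the open square; there I would instead examine the edge that $x$ exits through, say $x=1$, where $\dot x\equiv 0$ and $\dot y=0$ becomes $b_1-b_2+T_y\ln(1/y-1)=0$. The left-hand side is strictly decreasing in $y$ on $(0,1)$, so it has exactly one zero, $y=(1+\exp((b_2-b_1)/T_y))^{-1}\in(0,1)$ since $b_2>b_1$; this is the claimed rest point, interior only in the exploring player's coordinate, and $T_y\to\infty$ sends the exponent to $0$, hence $q\to 1/2$. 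One also checks that no rest point appears on the remaining open edges in this regime.

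For $T_y=0$ the system \eqref{eq:2by2} is the bimatrix replicator flow of a $2\times2$ game with interior equilibrium $(x^{*},y^{*})=(b_1/b_2,a_1/a_2)$, and I would exhibit the first integral
\[
E(x,y)=b_2\bigl(x^{*}\ln x+(1-x^{*})\ln(1-x)\bigr)-a_2\bigl(y^{*}\ln y+(1-y^{*})\ln(1-y)\bigr),
\]
verifying $\dot E\equiv 0$ along trajectories by direct differentiation. Since $b_2>0$, $-a_2>0$, and $t\mapsto c\ln t+(1-c)\ln(1-t)$ is strictly concave on $(0,1)$ with maximum at $t=c$ and value $-\infty$ at the endpoints, $E$ is strictly concave on the open square with unique maximizer $(x^{*},y^{*})$ and compact superlevel sets; hence each level set below the maximum is a convex closed curve around $(x^{*},y^{*})$ on which the vector field never vanishes (the only interior rest point being $(x^{*},y^{*})$). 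Therefore every trajectory from an interior non-equilibrium point is a periodic orbit, so the dynamics are cyclic and never come to rest, as asserted.

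The main obstacle is the $T_y=0$ regime: unlike the other two cases, which close by elementary algebra once the loci of rest points are identified, the ``cyclic, no resting point'' claim is a statement about the \emph{global} orbit structure rather than about solving $\dot x=\dot y=0$, and so it requires producing the conserved quantity $E$ and the concavity/compactness argument that identifies its interior level sets with the (periodic) orbits. A secondary source of care is the sign bookkeeping in the $T_y>0$ analysis, since the direction in which the interior rest point exits $(0,1)^2$ — and hence which edge carries the large-$T_y$ rest point and the sign of $T_y^{\mathrm{crit}}$ — is entirely governed by the sign of $\ln(a_2/a_1-1)$ fixed by the hypothesis on $a_1,a_2$.
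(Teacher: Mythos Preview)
Your proposal is correct and, for the two $T_y>0$ regimes, follows essentially the same route as the paper: the paper's argument is precisely the direct rest-point analysis you describe, solving $\dot x=\dot y=0$ on the locus $y=a_1/a_2$, observing that the resulting $x$ is affine in $T_y$ and exits $(0,1)$ at the threshold $T_y^{\mathrm{crit}}$, and then solving on the edge $x=1$ thereafter.

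The one genuine difference is the $T_y=0$ case. The paper does not prove the cyclic claim at all; it simply invokes the well-known Poincar\'e recurrence of the $2\times 2$ replicator dynamics (citing \cite{Bai18} in the surrounding discussion). You instead supply a self-contained argument via the explicit first integral $E$, whose strict concavity on $(0,1)^2$ (using $b_2>0$ and $-a_2>0$) forces the interior level sets to be closed curves and hence the orbits to be periodic. This is more than the paper offers and is the standard Hofbauer--Sigmund conserved-quantity proof; it buys you a complete, reference-free justification of the bullet the paper merely asserts. Your identification of this step as the ``main obstacle'' is apt, since it is the only part of the proposition that is not pure algebra on the rest-point equations.
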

If $a_2<2a_1$, then $x$ is decreasing in $T_y$ and hence, at $T_y^{crit}$ it becomes $0$ (rather than $1$). In this case, solving equation \eqref{eq:conditions} for $T_y^{\text{crit}}$ yields $T_y^{\text{crit}}=-b_1\cdot \lt\ln{\<\frac{a_2}{a_1}-1\>}\rt^{-1}$. By renaming the strategies of the $x$ player, this case is equivalent to the one presented in \Cref{prop:2by2}. The general case of more than $2$ strategies for each player is qualitatively equivalent to the case presented here.
\begin{remark}[Intuition of \Cref{prop:2by2}]\label{rem:intuition}
The main takeaway of \Cref{prop:2by2} is the qualitative description of the resting points of the Q-learning dynamics for any interior starting point. When $T_y=0$, the dynamics are the replicator dynamics, which are well-known to cycle around the unique interior Nash equilibrium in this case \cite{Bai18}. If $T_y>0$, then this suffices to ensure convergence to the unique interior QRE (cf. \Cref{thm:main}). As long as $T_y$ is small enough, i.e., smaller than a critical value, $T_y^{\text{crit}}$, the $y$ component of the QRE corresponds precisely to the Nash equilibrium (mixed) strategy for the $y$-player (the exploring player) and an interior value for the $x$-player (different that her Nash equilibrium mixture). This value is increasing (assuming that $a_2>2a_1$, otherwise it is decreasing or constant if $a_2=2a_1$) in the exploration rate $T_y$ of the $y$-player. This implies, that the QRE component for the $x$-player approaches the boundary. After exploration by the $y$-player exceeds the critical threshold $T_y^{\text{crit}}$, the $x$-player starts playing a pure strategy. At that point onward (i.e., for larger exploration rates), the utility of the $y$ player is dominated by her exploration term and her mixture (at QRE) starts to approach the uniform distribution.  
\end{remark}
The result of \Cref{prop:2by2} is illustrated in the main part of the paper via Asymmetric Matching Pennies (AMPs) game in \Cref{fig:zero}.

\subsection{3D Visualization of the Lyapunov function in \texorpdfstring{$n$}{kkk}-agent network games}\label{sub:visualization}
To visualize the Lyapunov function in zero-sum network games with $n$-agents with strictly positive exploration profiles, i.e., $T_k>0$ for all $k\in V$, (KL-divergence from the current action profile to the unique QRE for that exploration profile), we adapt the dimension reduction method of \cite{Li18} (cf. \cite{Leo21}). This yields panel 4 in \Cref{fig:line_network}.\par
For an $n$-agent network game, with a fixed and strictly positive exploration profile, we start by determining its unique QRE, $q$. By our main result, \Cref{thm:main}, this can be done by simulating the Q-learning dynamics. Then, we select two vectors $u,v$ with $n$ random entries each in $(0,1)$. The random entries of the vectors $u,v$ correspond to the probability with which each agent selects their first action (here $H$). Instead of forming convex combinations of these random vectors and plotting the Lyapunov function across this (randomly selected) space, we perform the following transformation that allows for a more comprehensive snapshot of the whole joint action space. Specifically, we map each coordinate $u(k)$ (and similarly for $v$) with $k\in V$ to $\tilde{u}(k) = \ln{u(k)/(1-u(k))}$. Then, we form linear combinations of the transformed vectors $\tilde{u},\tilde{v}$ using real-valued scalars,$\alpha, \beta\in \mathbb R$. Note that the all-zero vector (in the transformed space) corresponds to the uniform distribution for each agent (in the choice distribution space). Finally, we map the resulting point $z:=\alpha\cdot \tilde{u}+\beta\cdot \tilde{v}$ from the transformed space back to the product simplex via the (coordinate-wise) transformation $z(k)\to \exp{(z(k))}/(\exp{(z(k))}+1)$ and plot the potential at the resulting point (KL-divergence between that point and the unique QRE, $q$). We repeat the process for a range of both positive and negative values for $\alpha$'s and $\beta$'s. This yields the $x-y$ coordinates in panel 4 and the evaluation yields the depicted 3D surface. The process is summarized in \Cref{alg:visual}.

\begin{algorithm}[!bth]
\caption{3D Visualization of the Lyapunov function (KL-divergence)}\label{alg:visual}
\vspace*{0.1cm}
\raggedright
\textbf{Input (network game):} number of agents, payoff matrices, (strictly positive) exploration rates.\\
\textbf{Output:} Snapshot of the Lyapunov function (KL-divergence).\\[-0.3cm]

\begin{algorithmic}[1]
\Procedure{Compute QRE}{$T_k,k=1,\dots,n$}
\State{$q\gets$ unique QRE (e.g., by running \eqref{eq:kdynamics})}
\EndProcedure 
\State{$u,v$ generate random vectors with $n$ entries in $(0,1)$.}

\Procedure{Transform Variables}{$u,v, \alpha,\beta$}
\For {$k\gets V$}
\State {$u(k) \gets \ln{\<u(k)/(1-u(k))\>}$}
\State {$v(k) \gets \ln{\<v(k)/(1-v(k))\>}$}
\EndFor
\State {$z \gets \alpha\cdot u+\beta\cdot v$}
\For {$k\gets V$}
\State {$z(k) \gets \exp{\<z(k)/(z(k)+1)\>}$}
\EndFor
\EndProcedure
\Procedure{Evaluate Lyapunov function}{$q,z$}
\State{$\KL=0$}
\For{$k\in V$}
\State{$\KL\gets \KL+
q(k)\ln\frac{q(k)}{z(k)}+(1-q(k))\ln\frac{1-q(k)}{1-z(k)}$}
\EndFor
\Return tuple ($\alpha,\beta,\KL$) 
\EndProcedure
\State{\textbf{plot} $\gets$ ($\alpha,\beta,\KL$)}
\end{algorithmic}
\end{algorithm}

A restriction of this method in $n$-player games is that, for each point that it generates, it uses the same $\alpha,\beta$ to scale the transformed variables of \emph{all} agents. 
\Cref{fig:lyapunov_sup} shows snapshots of the Lyapunov function in four instances of the \eqref{eq:network} game with fixed (and strictly positive) exploration profiles for different numbers of agents.

\begin{figure}[!tb]
    \centering
    \includegraphics[width=0.97\linewidth,clip=true, trim=5.2cm 0 4.3cm 0]{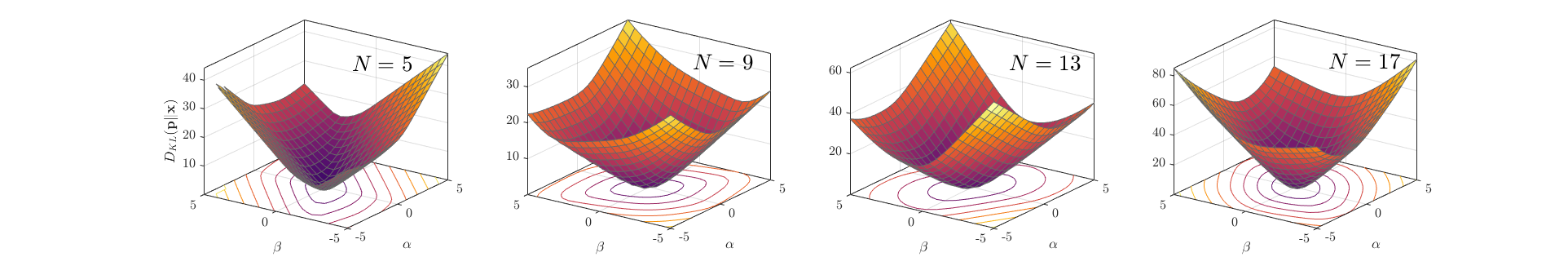}
    \caption{Snapshots of the Lyapunov function (KL-divergence between each choice distribution profile and the unique QRE) in four instances of the \eqref{eq:network} game with fixed (and strictly positive) exploration profiles and different numbers of agents. In all cases, the Lyapunov (potential) function is convex with a unique minimizer at $0$.}
    \label{fig:lyapunov_sup}
\end{figure}

\subsection{Equilibrium selection in the network game}

One question that is hard to tackle theoretically concerns the equilibrium selection as exploration rates converge to $0$. As we saw in \Cref{thm:main}, when all agents have positive exploration rates, then there is a unique QRE and the joint-learning dynamics converge to that QRE. However, as exploration rates approach zero (for instance, after the exploration phase ends for all agents), it is not clear which equilibrium will be selected in the original game (as the limit of the sequence of the unique QRE for the different strictly positive exploration profiles). \par
In this part, we test this question experimentally in the \eqref{eq:network} game of \Cref{exp:network}. We consider an instance with $3$ non-dummy agents and different exploration policies for the agents. Recall that in this case, the original network game (with no exploration) has multiple Nash equilibria of the following form: the odd agents ($p_1$ and $p_3$) select $T$ with probability 1 and the even agent ($p_2$) plays an arbitrary strategy in $(0,1)$ (i.e., probabilities of playing $H$). In any equilibrium, the payoff of $p_2$ is $0$ whereas the payoffs of $p_1$ and $p_3$ sum up to $2$. However, the crucial point is that the split of $2$ between $p_1$ and $p_3$ critically depends on the strategy of $p_2$. In particular, we saw in \Cref{fig:summary} that without exploration, the Q-learning dynamics can converge to any of these multiple equilibria, thus inducing arbitrary asymmetries between the payoffs of $p_1$ and $p_3$.\par
The results of one representative exploration scenario (with linearly changing exploration rates) and averages over 50 runs with randomly matched CLR-1 and ETE exploration policies are presented in \Cref{fig:selection}.
\begin{figure}[!tb]
    \centering
    \includegraphics[width=0.98\linewidth]{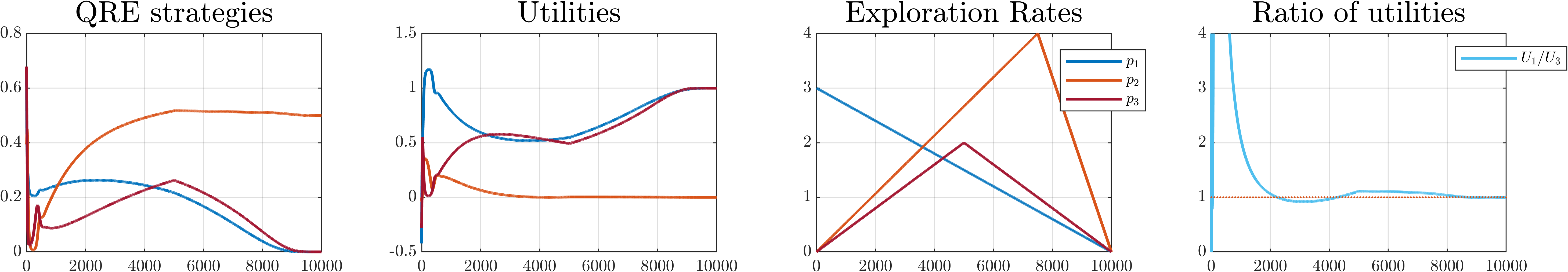}\\[0.2cm]
    \includegraphics[width=0.98\linewidth]{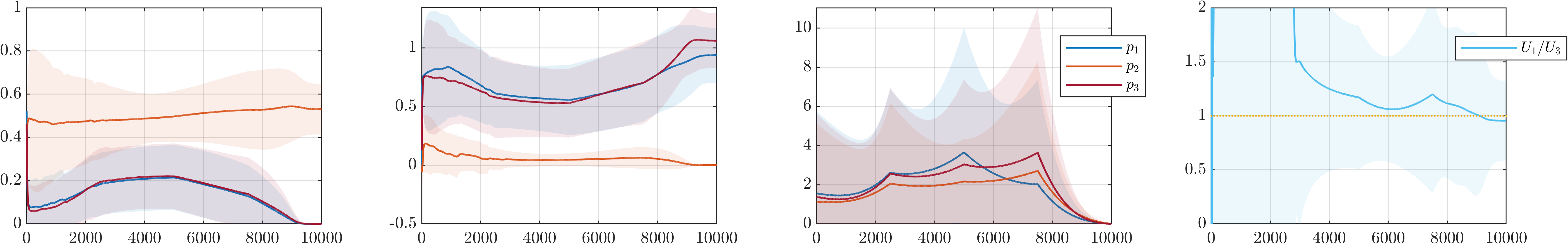}
    \caption{Effects of exploration on equilibrium selection in the \eqref{eq:network} of \Cref{exp:network}. The upper panels show an individual run and the bottom panels show averages (means and 1 standard deviation as shaded region around the mean) over 50 runs. Panels 1 to 3 show the probability of playing $H$ at QRE, the utilities and the exploration rates of the agents, respectively. The effect of exploration is shown via the ratio of utilities of $p_1$ and $p_3$ in the fourth panel of each row. Exploration by the even agent ($p_2$) leads that agent to select the $0.5$ strategy at equilibrium (when exploration drops back to $0$ by all agents) which results in a fair split (close to 1, dotted red line in panels 4) of the payoffs between agents $p_1$ and $p_3$.}
    \label{fig:selection}
\end{figure}

The main takeaway of these experiments is captured by the last panel "Ratio of utilities" of each row. Namely, sufficient exploration by the even agent (the agent with multiple equilibrium strategies) leads that agent to select a strategy close to the uniform one (here $0.5$ since there are two actions). In turn, this leads to a fair split of the stake between the odd numbered agents ($p_1$ and $p_3$). This is in sharp contrast to the case without exploration (cf. \Cref{fig:summary} in the main part) in which any equilibrium (i.e., any strategy between $0,1$) is a potential limit point of the dynamics for the even agent (thus, leading to arbitrary splits of the share between the odd numbered agents). \par
While these results cannot lead to a formal argument about the effect of exploration in equilibrium selection (when exploration goes back to zero for all agents), they highlight the importance of further studying equilibrium selection in competitive environments both experimentally and theoretically. In particular, the (potentially positive) effects of (individual) exploration to social welfare (here, fair distribution of rewards) constitute an concrete and intriguing direction for future research in this area.

\end{document}